\theoremstyle{plain}
\newtheorem{theorem}{Theorem}[section]
 \newtheorem{assumption}{Assumption}[section] % Numbered by section
    \newtheorem{proposition}[assumption]{Proposition} % Shares numbering with assumption
\theoremstyle{remark}
\newtheorem{remark}[theorem]{Remark}
\title{\vspace{-40pt}Sign and signed rank tests for paired functions\vspace{-10pt}}
\author{Mark J. Meyer \\ Department of Mathematics and Statistics, Georgetown University \\ Washington, DC USA}
\date{}
\begin{document}

	\maketitle
	
\begin{abstract}
Simple nonparametric tests for paired functional data are an understudied area, despite recent advances in similar tests for other types of functional data. While the sign test has received limited treatment, the signed rank-type test has not previously been examined. The aim of the present work is to develop and evaluate these types of tests for functional data. We derive a simple, theoretical framework for both sign and signed rank tests for pairs of functions. In particular, we demonstrate that doubly ranked testing---a newly developed framework for testing hypotheses involving functional data---is a useful conduit for examining hypotheses regarding pairs o,f functions. We briefly examine the operating characteristics of all derived tests. We also use the described approaches to re-analyze pairs of functions from a randomized crossover study of heart health during simulated flight.
\end{abstract}

	\section{Introduction}
	\label{s:intro}
	
	Recent work on nonparametric tests for functional data considers a variety of approaches for analyzing two or more independent samples including Mann-Whitney or Wilcoxon rank sum type tests, Anderson-Darling tests, permutation tests, and Kruskal-Wallis tests \citep{Hall2007, Lopez2009, Lopez2010, Chak2015, PomannStaicu2016, Lopez2017, Abramowicz2018,Melendez2021,Meyer2025}. However, the nonparametric paired functional case remains limitedly studied with work by \cite{Berrett2021} and \cite{Melendez2021} specifically addressing it. For the purposes of this manuscript, we assume that the data are curves which have been sampled on a finite grid. While functional data can take the form of higher dimensional objects, such as surfaces, we limit our consideration to curves.
	
	\cite{Melendez2021} consider random projection-based sign tests. The test begins by projecting the curve into a randomly chosen basis space constructed using Brownian motion. The authors numerically integrate over each to obtain a pair-specific score. The scores are then analyzed using a sign test---see \cite{Kloke2015}, among others, for a discussion of the sign test. \cite{Berrett2021} consider independence tests to determine if two samples are dependent or not. This method can be used to assess dependency, however, it is a different testing framework than the usual paired data-type test. The null of their test for independence is that the samples are independent, i.e. that they come from separate populations. Our setting assumes a dependency in the data whereby functions are sampled twice on each subject and seeks to account for it. The goal is then to assess if there is a difference between the conditions under each measurement occurrence.
	
%	 \cite{Hall2007, Lopez2009, Lopez2010, Chak2015, PomannStaicu2016, Lopez2017, Abramowicz2018, Berrett2021, Melendez2021}, and \cite{Meyer2025}. 
	
	%	when the data are either curves or surfaces
	
%	a random projection-based test by \cite{Melendez2021}, maps the points from the high dimensional functional space to a randomly chosen low-dimensional space defined using Brownian motion. This approach has application in parametric tests for grouped functional as well, see for example \cite{Cuesta2010}. The random projections-based approach effectively generates scores for each subject via a random basis function and an integral approximation of the subject-specific curve. It then treats these scores as the data in a traditional MWW test, thus the scores are ranked and then summed by group. Neither of these alternative methods have publicly available code.
	
%	random projection-based approaches perform well in larger samples \citep{Melendez2021}.  the brownian motion-based basis functions for the random projections. When using random projections, the integration is performed before constructing ranks which ignores the fact that rank may be dynamic over time. To handle the curve ranking problem,  rely on subject-specific summary scores, although neither constructs their scores under the null. Thus, the tests are only conducted under the null when comparing the scores between groups. 
%	
%	\cite{Berrett2021}

	%%% EDIT %%%
	In this work, we describe two sign tests
%	---one similar to a test described by \cite{Melendez2021}---
	and a signed doubly ranked test for pairs of functional data. Our functional sign tests are constructed under the null based on a series of simple assumptions with the goal of constructing a pair-specific univariate score for the difference in curves. We consider both a sufficient statistic-based sign test and an integral-based sign test similar in spirit to the work of \cite{Melendez2021}. The signed doubly ranked test builds upon the work of \cite{Meyer2025} who develop the framework for two (or more) independent sets of functions. Doubly ranked tests are nonparametric tests that first rank functional data, preprocessed or raw, at each time point and then construct a summary rank using a relevant statistic. They then re-rank the summaries in the final step of the analysis. The extension of the framework to the paired case is not immediate and, consequently, has not previously been considered. Thus, we undertake a study of how to implement a doubly ranked test for paired functions. Via an empirical study, we show that while all tests perform reasonably well, the signed doubly ranked test performs the best in terms of type I error and power. We illustrate the use of all of the tests using a crossover study of in-flight heart rate and heart rate variability first described by \cite{Meyer2019}. 
	
	The remainder of this manuscript proceeds as follows: in Section~\ref{s:paired} we formalize the concept of paired functions, using the flight study as motivation. Section~\ref{s:sign} contains our discussion of the functional sign tests while Section~\ref{s:sdrt} provides the derivation of the signed doubly ranked test. Our empirical study is in Section~\ref{s:sim} and our reanalysis of the flight study is in Section~\ref{s:data}. Finally, we provide a brief discussion of the various approaches to nonparamterically analyzing paired functions in Section~\ref{s:disc}.

	\section{Paired Functions}
	\label{s:paired}
	
	In general, functional data is data where the unit of observation is a curve or surface measured over an interval (or intervals) of time or location. Paired functional data is the case where two curves or surfaces are measured in sets on the same subjects perhaps under different experimental conditions or at different calendar times. Restricting our focus to functions, let $X_{ij}(s)$ denote the subject $i$'s, $i = 1, \ldots, n$, function at time or location $s$ for condition $j$, $j = 0,1$. Further, let $D_i(s) = X_{i1}(s) - X_{i0}(s)$ be the difference function for subject $i$. That is, it is the function that tracks the point-wise difference between a subject's functions over $s \in \mathcal{S}$.
	
	While the functions are written as as continuous functions, we do not observe them continuously. Instead, we observe discrete realizations of $X_j(s)$, typically over the grid $\mathcal{S} = \left[s: s = s_1, \ldots, s_S\right]$. To adjust for this discrepancy between what we are trying to measure and what is physically measurable, the observed data is usually projected into a well-behaved basis space to model the underlying function. There are many basis functions available for preprocessing, see \cite{Wang2016} among others. We consider using a functional principal components analysis (FPCA) approach using Fast Covariance Estimation (FACE) which is a nonparametric FPCA \citep{Xiao2013,Xiao2016} and FPCA with smoothed covariance or FPCA SC \citep{Yao2005,Di2009}.
	
	One example of paired functional data comes from an experiment described by \cite{Meyer2019} which examines indicators of heart health under two conditions: at an altitude of 7000 ft. (flight) and at sea level (control). The experiment was designed to mimic a six hour flight in a commercial airline. Conditions were controlled via the use of a hypobaric pressure chamber and subjects were blinded to the condition. Various measurements were taken every five minutes during flight. Preprocessed and differenced data from the five metrics considered by this study is depicted in Figure~\ref{f:hr}. Under each condition, heart rate and heart rate variability were measured pseudo-continuously for the duration of the experiment, about 400 minutes. The measurements were aggregated to the five-minute scale which is how the data is presented in Figure~\ref{f:hr}. Due to missingness, we use FPCA SC to preprocess and retain components coressponding to 99\% of the explained variability.
	
%	\begin{figure}
%		\centering
%		\includegraphics[width = 2.1in, height = 2.1in]{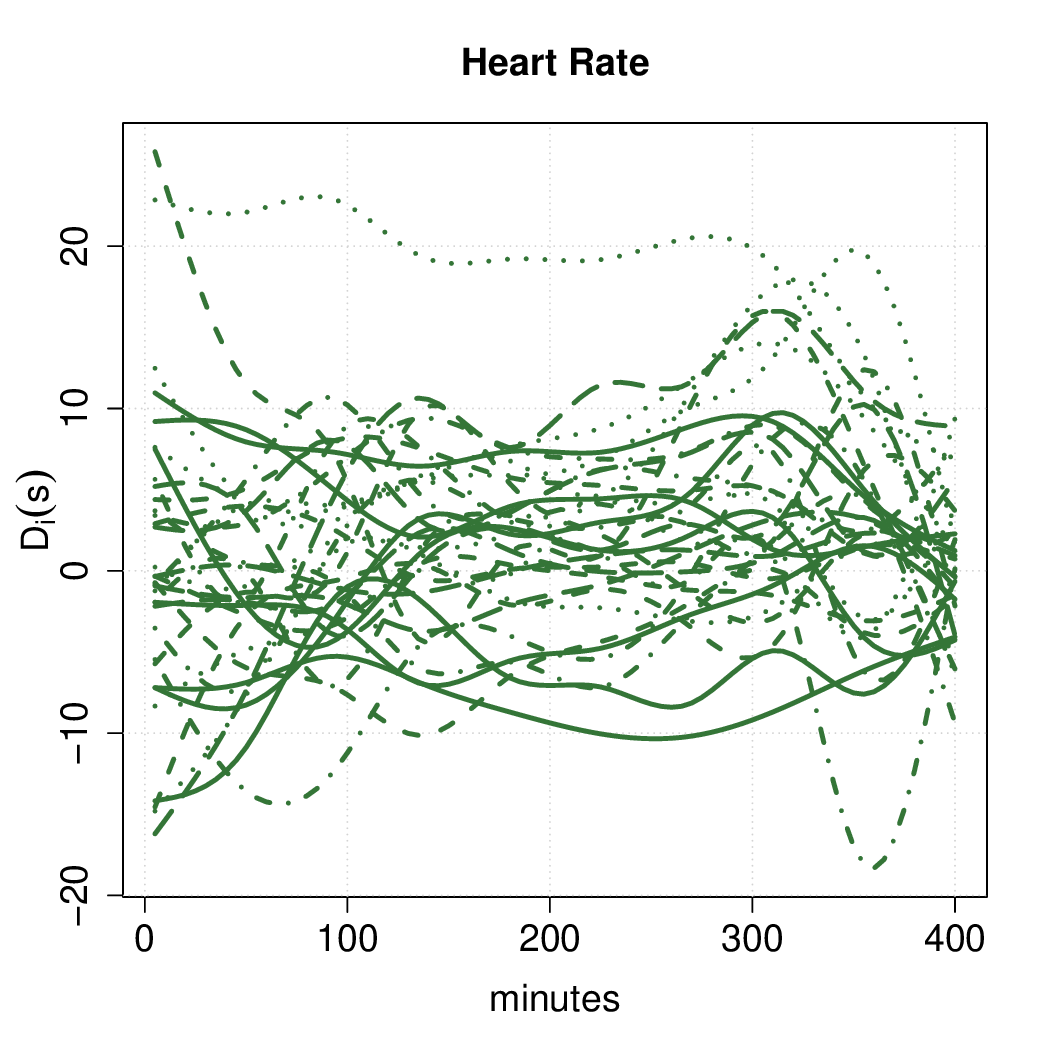}
%		\caption{Difference in heart rate (flight condition $-$ control condition).\label{f:hr}}
%	\end{figure}

	\begin{figure}
		\centering
		\includegraphics[width = 2.1in, height = 2.1in]{hr_diff.eps}
		\includegraphics[width = 2.1in, height = 2.1in]{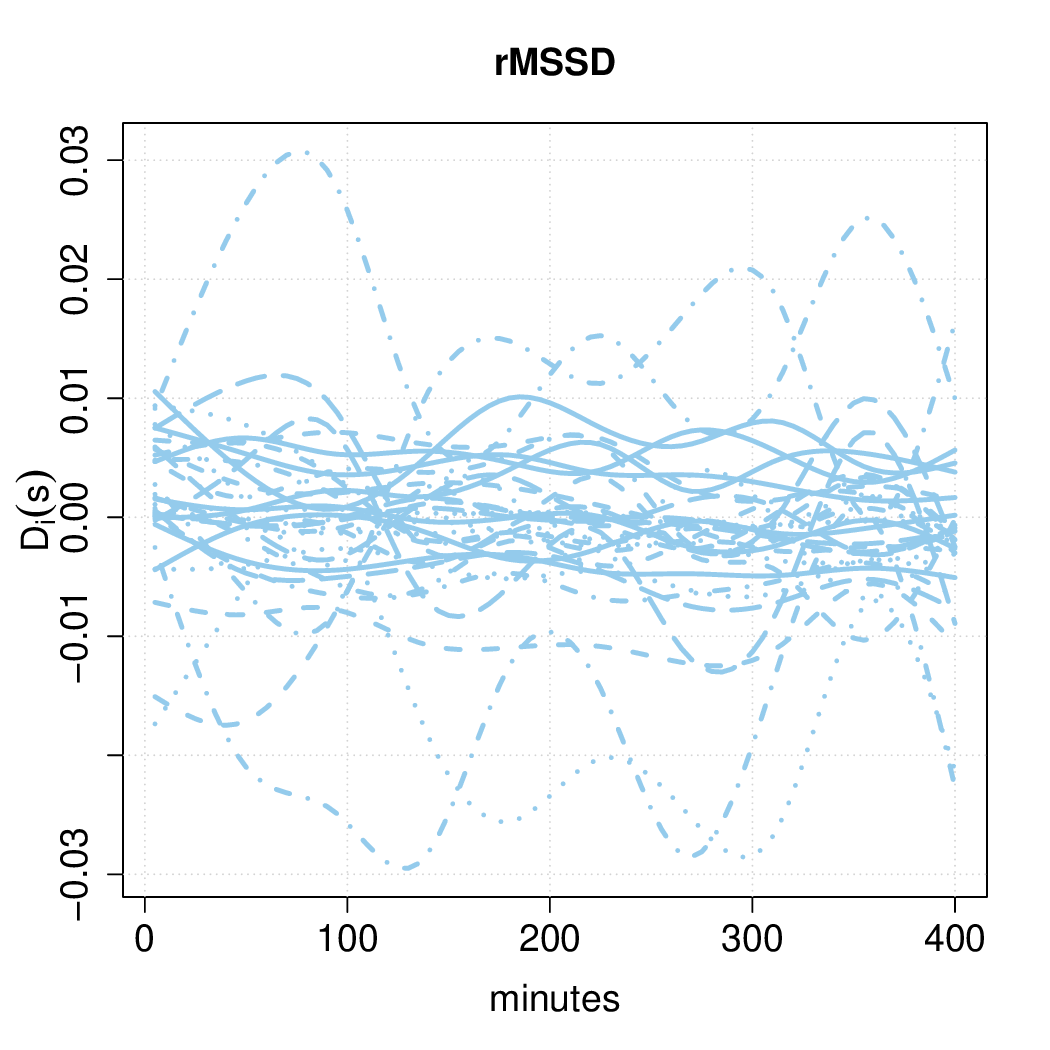}
		\includegraphics[width = 2.1in, height = 2.1in]{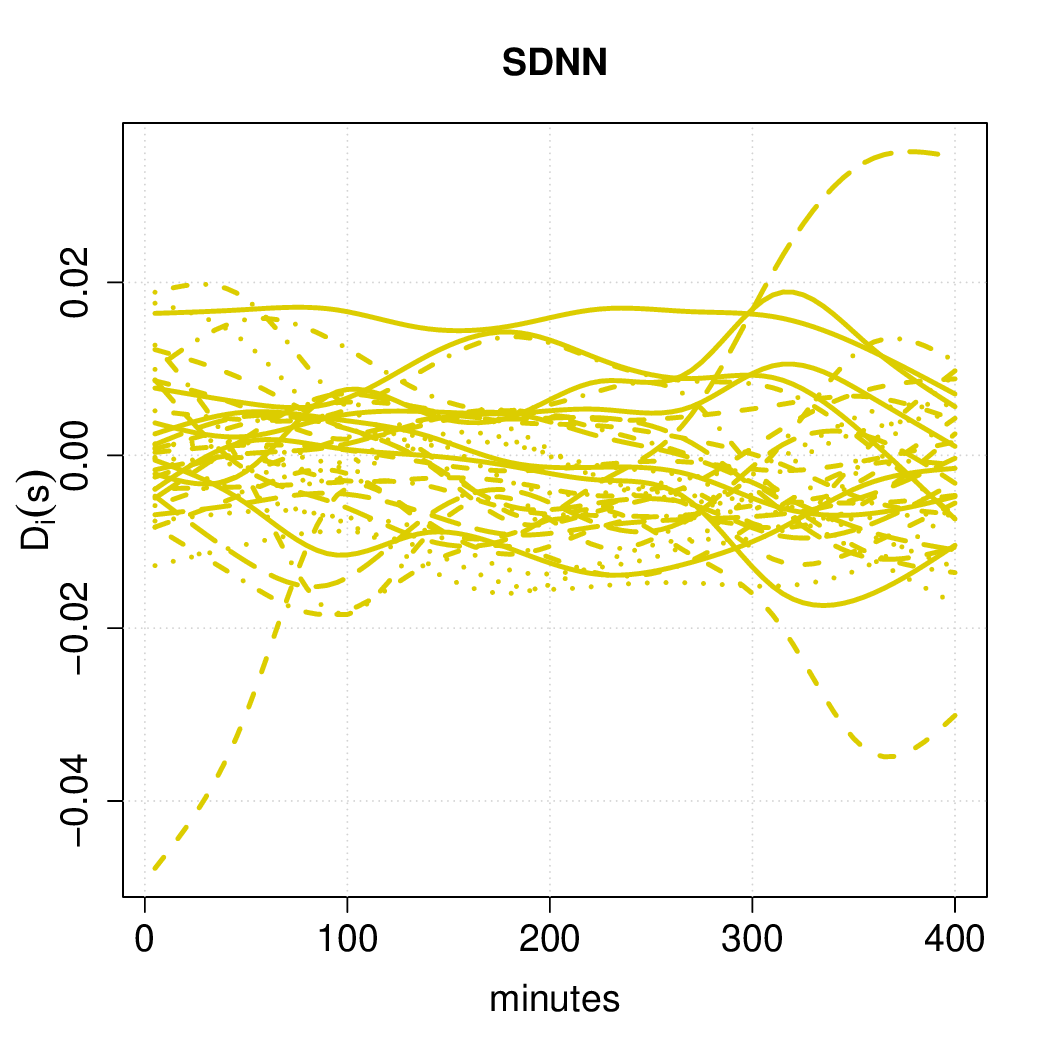}
		\includegraphics[width = 2.1in, height = 2.1in]{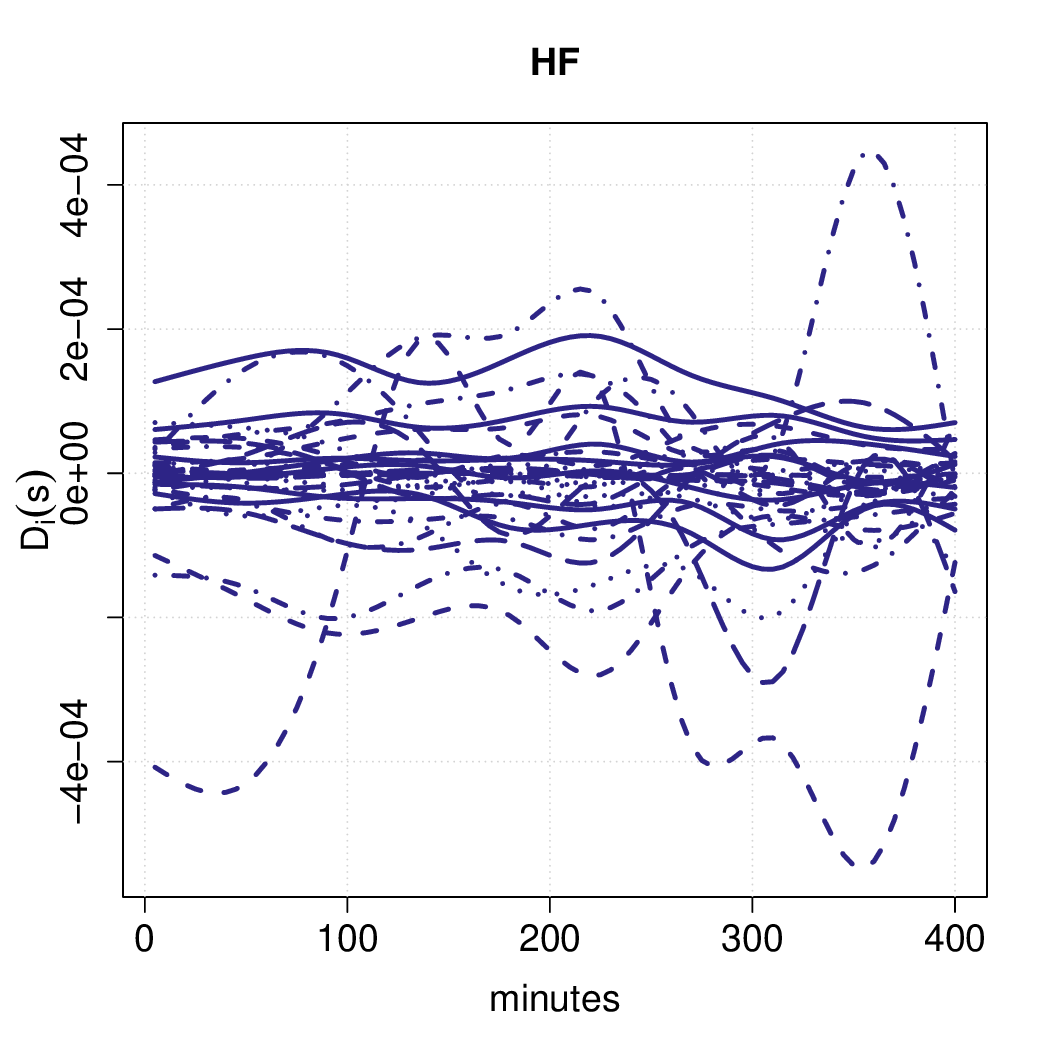}
		\includegraphics[width = 2.1in, height = 2.1in]{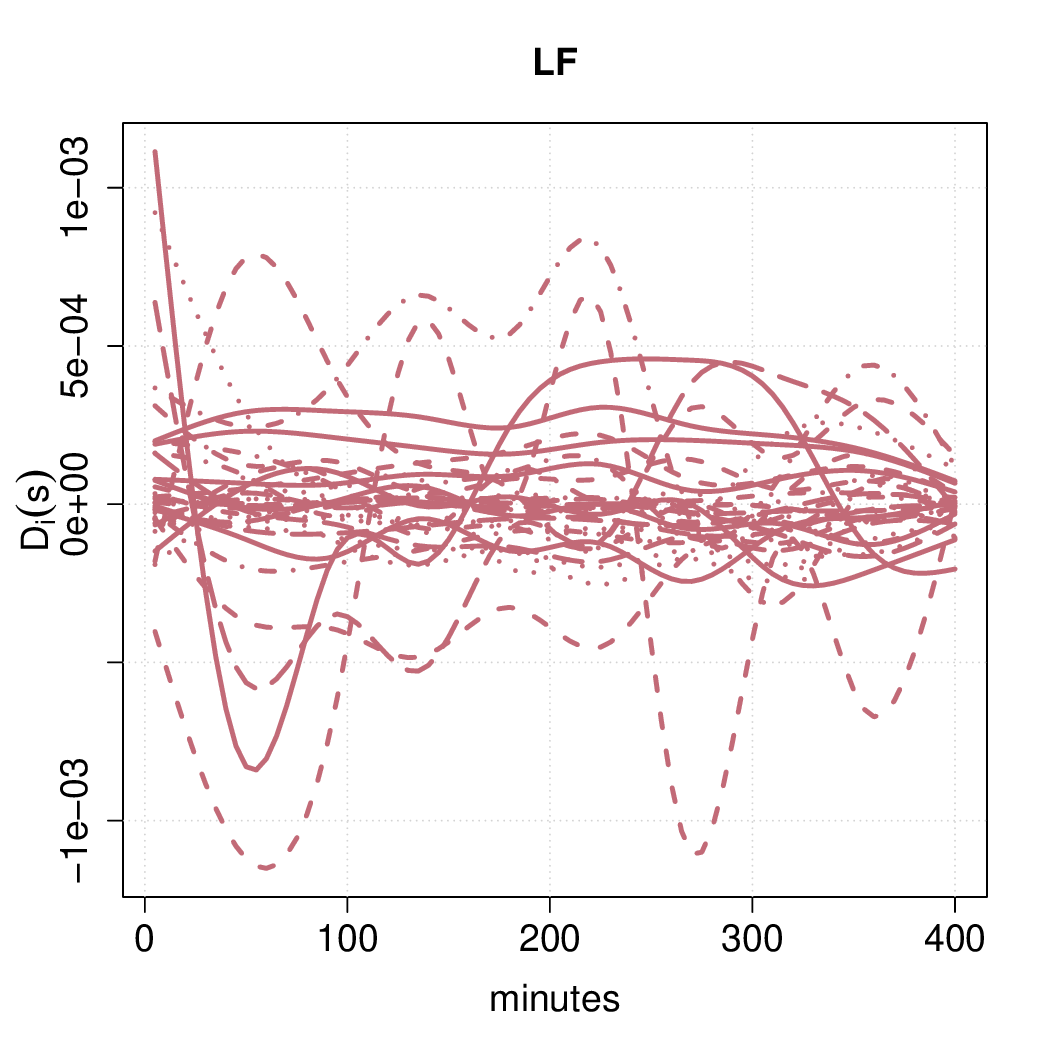}
		\caption{Preprocessed and differenced data from \cite{Meyer2019} (flight condition $-$ control condition). rMSSD (root mean square of successive RR interval differences), SDNN (the standard deviation of normal-to-normal intervals), HF (high frequency power), and LF (low frequency power) all measure heart rate variability. \label{f:hr}}
	\end{figure}

	\section{Functional Sign Tests}
	\label{s:sign}
	
%	Reorder: procedure first with general statistic then suff stat then average rank

%Change ties to zeros

	Under $H_0$, we assume no difference between measurement occurrences at each time $s$. If $\Delta(s)$ is the true difference, then $\Delta(s) \stackrel{H_0}{=} 0$. If the null is true for all $s$ and we treat the $D_i(s)$ as a sample of differences over time, then we should see no difference for summaries of the $D_i(s)$ values under $H_0$ as well. Let $\Lambda$ denote this global difference which also represents a true difference in the curves over $s$. Then, under $H_0$, $\Lambda = 0$ versus the alternative that $H_A : \Lambda \neq 0$. To evaluate $H_0$ in a nonparametric fashion, we first examine the signs of the values of $D_i(s)$ and then construct relevant summary statistics.
		
	Let $G_i(s)$ denote the sign of the difference for subject $i = 1, \ldots, n$ at time or location $s$, thus $G_i(s) = \text{sign}[D_i(s)]$. In the absence of zeros, $G_i(s)$ follows a generalized Rademacher distribution of the form
	\begin{align*}
		P_{G_i(s)}[g_i(s)] = \left\{ \begin{array}{ccl}
					1 - p & & g_i(s) = -1 \\
					p & & g_i(s) = 1\\
					\end{array}\right.,
	\end{align*}
	where $g_i(s)$ is a realization of $G_i(s)$ and $p \in (0, 1)$. We assume there is some true $p$, potentially subject-specific, that dictates the sign at each time point $s$. Define the statistic $t_r[g_i(\mathcal{S})] = \sum_{s = 1}^S g_i(s)$.
	
\begin{proposition}\label{prop:suffRad}
$t_r[g_i(\mathcal{S})]$ is sufficient for $p$.
\end{proposition}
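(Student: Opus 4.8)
The plan is to invoke the Fisher--Neyman factorization theorem. First I would make explicit the working assumption that, conditional on $p$, the signs $g_i(s_1), \ldots, g_i(s_S)$ are independent draws from the generalized Rademacher distribution defined above; this is exactly the modeling premise that a single (potentially subject-specific) $p$ governs the sign at each $s$. Under this assumption the joint probability mass function of the vector $g_i(\mathcal{S}) = [g_i(s_1), \ldots, g_i(s_S)]$ factors as the product of the per-time-point masses.

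Next I would rewrite that product in terms of counts. Letting $k = \#\{s : g_i(s) = 1\}$ denote the number of positive signs, the joint pmf collapses to $p^{k}(1-p)^{S-k}$, since each value $+1$ contributes a factor $p$ and each $-1$ contributes a factor $1-p$, while $S$ is fixed. The key algebraic observation is that $t_r[g_i(\mathcal{S})] = \sum_{s=1}^S g_i(s) = k - (S-k) = 2k - S$, so that $k = (t_r + S)/2$ is a one-to-one function of $t_r$ for fixed $S$. Substituting yields the joint pmf as $p^{(t_r + S)/2}(1-p)^{(S - t_r)/2}$, which depends on the observed signs only through $t_r$.

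Finally I would conclude sufficiency by exhibiting the factorization $P_{G_i(\mathcal{S})}[g_i(\mathcal{S})] = g(t_r[g_i(\mathcal{S})], p)\, h[g_i(\mathcal{S})]$ with $g(t_r, p) = p^{(t_r + S)/2}(1-p)^{(S - t_r)/2}$ and $h \equiv 1$. Since the entire dependence on $p$ is carried by $g$ through $t_r$ alone, the factorization theorem delivers that $t_r[g_i(\mathcal{S})]$ is sufficient for $p$.

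The step I expect to require the most care is not the algebra but the justification of the independence assumption. Functional data are typically strongly correlated across $s$, so the product form of the joint pmf is a modeling idealization rather than a consequence of the setup. I would therefore be explicit that the claim holds under the stated null model in which the signs are independent Rademacher variables with common parameter $p$; accommodating genuine within-curve dependence would alter the sufficiency argument and lies outside the scope of this proposition.
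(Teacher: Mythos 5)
Your proof is correct under its stated assumptions, but it takes a genuinely different route from the paper's. The paper argues pointwise: it maps $G_i(s)$ to a Bernoulli variable (writing $B_i(s) = 2G_i(s) - 1$, which is actually the map in the wrong direction---the Bernoulli variable should be $B_i(s) = (G_i(s)+1)/2$), observes that each realization $g_i(s)$ is sufficient for $p$ at its own time point, and then concludes that $t_r[g_i(\mathcal{S})]$ is sufficient ``as a function of sufficient statistics.'' You instead compute the joint pmf of the whole sign vector and apply the Fisher--Neyman factorization theorem, collapsing the likelihood to $p^{(t_r+S)/2}(1-p)^{(S-t_r)/2}$ with $h \equiv 1$. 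Your route buys two things. First, it is more rigorous at the final step: as a general principle, a function (or sum) of statistics that are each sufficient for their own observation need not be sufficient for the joint sample, so the paper's closing sentence carries more weight than it acknowledges; your explicit factorization replaces that leap with a computation. Second, you make visible exactly where independence of the signs across $s$ enters---the product form of the joint pmf---whereas the paper never states this assumption even though its conclusion also requires it; your closing caveat that within-curve dependence would break the argument is well placed and accurate. What the paper's approach buys in exchange is brevity and a connection to the textbook Bernoulli sufficiency fact, at the cost of these elisions.
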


\begin{proof}
Let $B_i(s) = 2G_i(s) - 1$. Since $G_i(s)$ follows a Rademacher distribution, $B_i(s)$ is Bernoulli with probability of success $p$. A realization of $B_i(s)$, $b_i(s)$, is sufficient for $p$. Since realizations of $G_i(s)$, $g_i(s)$, are functions of $b_i(s)$, $g_i(s)$ is also sufficient for $p$ at time or location $s$. Summing over $s$, $t_r[g_i(\mathcal{S})]$ is sufficient for $p$ as a function of sufficient statistics.
\end{proof}

	 If zeros are allowed, $G_i(s)$ follows a categorical distribution with probability vector $\textbf{p} = \left( \begin{array}{ccc} p_{-1} & p_{0}& p_1\end{array}\right)'$. We assume that $\textbf{p}$ is some true vector of probabilities, potentially subject-specific again, that dictates the sign at each time point. A sufficient statistic for $\textbf{p}$ is the vector
	 \begin{align*}
	 	\textbf{g}_i(s) = 
		\left(\begin{array}{c}
						\left[g_i(s) = -1\right] \\
						\left[g_i(s) = 0\right] \\
						\left[g_i(s) = 1\right] \\
			\end{array}\right),
	 \end{align*}
	 where $[\cdot]$ denotes the Iverson bracket, thus $[A] = 1$ if $A$ is true and 0 otherwise. Define the following statistic:
	 \begin{align*}
	 	t_c[g_i(\mathcal{S})] = \sum_{s = 1}^S \textbf{w}'
			\left(\begin{array}{c}
						\left[g_i(s) = -1\right] \\
						\left[g_i(s) = 0\right] \\
						\left[g_i(s) = 1\right] \\
			\end{array}\right),
	 \end{align*}
	 where $\textbf{w}$ is a vector of real-valued weights.

\begin{proposition}\label{prop:suffCat}
$t_c[g_i(\mathcal{S})]$ is sufficient for $\textbf{p}$.
\end{proposition}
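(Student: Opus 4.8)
The plan is to reduce the claim to the sufficiency of the one-hot vector $\mathbf{g}_i(s)$, which is taken as given in the preceding paragraph, and then to control precisely what is lost when that vector is collapsed to the scalar $t_c$. I would proceed in three steps. First, modeling $g_i(1), \ldots, g_i(S)$ as i.i.d. categorical draws with probability vector $\mathbf{p}$, I would write the likelihood and apply the factorization theorem. Since $\prod_{s=1}^S \prod_{k \in \{-1,0,1\}} p_k^{[g_i(s)=k]} = \prod_k p_k^{n_k}$ with $n_k = \sum_{s=1}^S [g_i(s)=k]$, the likelihood depends on the data only through the count vector $\mathbf{n}_i = \sum_{s=1}^S \mathbf{g}_i(s) = (n_{-1}, n_0, n_1)'$. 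This identifies $\mathbf{n}_i$ as sufficient for $\mathbf{p}$ and mirrors exactly the summation step used in the proof of Proposition~\ref{prop:suffRad}.

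Second, I would note that $t_c[g_i(\mathcal{S})] = \mathbf{w}'\mathbf{n}_i$ is a deterministic function of the sufficient statistic $\mathbf{n}_i$. By the factorization theorem, $t_c$ is itself sufficient for $\mathbf{p}$ exactly when the likelihood can be recovered from $t_c$ alone, i.e.\ when the map $\mathbf{n}_i \mapsto \mathbf{w}'\mathbf{n}_i$ is injective on the set of attainable counts $\{\mathbf{n} \in \mathbb{Z}_{\geq 0}^3 : n_{-1} + n_0 + n_1 = S\}$. Using the constraint to eliminate $n_0 = S - n_{-1} - n_1$, this reduces to whether $(w_{-1}-w_0)\,n_{-1} + (w_1-w_0)\,n_1$ separates distinct integer pairs $(n_{-1}, n_1)$ on the triangular lattice of size $S$.

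Third, I would discharge this injectivity by imposing (or verifying) a mild condition on $\mathbf{w}$: it suffices that $(w_{-1}-w_0)$ and $(w_1-w_0)$ be rationally independent, or, more concretely, that $\mathbf{w}$ act as a positional encoding such as $\mathbf{w} = (1,\, S+1,\, (S+1)^2)'$, under which $t_c$ becomes a base-$(S+1)$ representation of $\mathbf{n}_i$ and is therefore invertible. Given injectivity, $\mathbf{n}_i$ is a function of $t_c$, the two carry identical information, and $t_c$ inherits sufficiency.

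The main obstacle is precisely this last point, and it is more than a formality: a single real number cannot in general be sufficient for the two free parameters in $\mathbf{p}$, so the naive reading ``a function of a sufficient statistic is sufficient'' is false here, since functions of sufficient statistics need not be sufficient. The substantive content of the proposition therefore lives entirely in the admissible choices of $\mathbf{w}$, and the plan's success hinges on stating the weight condition under which $t_c$ does not collapse distinct count vectors. As a boundary case the statement must exclude, note that $\mathbf{w} = (-1, 0, 1)'$ gives $t_c = n_1 - n_{-1}$, which fails to be sufficient because it cannot recover $n_0$; for instance, with $S=2$ the counts $(0,2,0)$ and $(1,0,1)$ both yield $t_c = 0$ yet carry the distinct likelihoods $p_0^2$ and $p_{-1}p_1$. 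I would flag this case explicitly to delimit the range of weights for which the proposition holds.
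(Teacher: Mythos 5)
Your proposal is mathematically correct, but it does not follow the paper's route---and the divergence matters, because your argument exposes a genuine defect in the paper's own proof. The paper disposes of Proposition~\ref{prop:suffCat} in one line: $t_c[g_i(\mathcal{S})]$ is a linear combination of sufficient statistics, hence sufficient. Your factorization step shows exactly what that line glosses over: the object the factorization theorem certifies as sufficient is the count vector $\textbf{n}_i = \sum_{s=1}^S \textbf{g}_i(s)$, with entries $n_{-1}, n_0, n_1$, and the scalar $t_c = \textbf{w}'\textbf{n}_i$ is merely a function of it. Since a function of a sufficient statistic is itself sufficient only when it is injective on the attainable values, sufficiency of $t_c$ is a property of the weight vector, not a consequence of linearity. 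Your counterexample makes this concrete, and it strikes at the paper's own recommended choice: with $\textbf{w} = (-1,0,1)'$ as in Remark~\ref{rm:sign}, $t_c = n_1 - n_{-1}$ discards $n_0$, and for $S = 2$ the count vectors $(0,2,0)$ and $(1,0,1)$ share $t_c = 0$ while carrying likelihoods $p_0^2$ and $p_{-1}p_1$; the conditional law of the data given $t_c$ therefore depends on $\textbf{p}$, and sufficiency fails. Read as a claim for arbitrary real $\textbf{w}$, the proposition is thus false, and the paper's argument cannot be repaired without a restriction such as your injectivity condition (base-$(S+1)$ positional weights, or weight gaps $w_{-1}-w_0$ and $w_1-w_0$ that are rationally independent). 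The contrast with Proposition~\ref{prop:suffRad} is instructive: there the scalar $t_r$ is an invertible affine function of the binomial count, so the same loose ``combination of sufficient statistics'' principle happens to yield a true conclusion, whereas here the parameter is two-dimensional and a generic scalar collapses distinct likelihoods.

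One point to tighten: the ``exactly when'' in your second step asserts that injectivity is necessary as well as sufficient. That is in fact true here, because distinct count vectors produce non-proportional monomials $\prod_k p_k^{n_k}$ on the interior of the simplex, so any collision forces the conditional distribution to depend on $\textbf{p}$---but that necessity deserves its one-sentence justification rather than being stated as obvious.
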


\begin{proof}
$t_c[g_i(\mathcal{S})]$ is a linear combination of statistics that are sufficient for $\textbf{p}$. Thus, $t_c[g_i(\mathcal{S})]$ is also sufficient for $\textbf{p}$.
\end{proof}

	\begin{remark}\label{rm:sign}
		If $\textbf{w} =  \left(\begin{array}{c}
						-1 \\ 0 \\ 1 \\
			\end{array}\right)$, then $t_c[g_i(\mathcal{S})]$ $= t_r[g_i(\mathcal{S})]$.
	\end{remark}
	
	Using these weights, the statistics will be the same and consist of summing the signs of the difference function over $\mathcal{S}$. We then calculate $t_r[g_i(\mathcal{S})]$ for each subject resulting in a univariate score and apply the univariate sign test to the sample of $t_r[g_i(\mathcal{S})]$. The test statistic is
	\begin{align*}
		U_r^+ = \sum_{i=1}^n 1\left[ t_r\{g_i(\mathcal{S})\} > 0 \right],
	\end{align*}
	which, under $H_0$, follows a Binomial distribution with probability of success equal to $1/2$.
	
	An alternative univariate score construction involves integrating over the difference function for each subject. For realizations of $D_i(s)$, $d_i(s)$, the integral-based statistic is
	\begin{align*}
		t_n[d_i(\mathcal{S})] = \int_{\mathcal{S}} d_i(s) ds.
	\end{align*}
	Since $d_i(s)$ is observed discretely and $\mathcal{S}$ is measured as grid, calculating $t_n[d_i(\mathcal{S})]$ requires numerical integration. We implement the trapezoidal rule to approximate the integral with grid points determined by the sampling grid, $\mathcal{S}$. \cite{Melendez2021} propose a similar statistic, integrating $X_{i1}(s)$ and $X_{i0}(s)$ separately before differencing. The key distinctions between their work and ours is the type of basis function and order in which the basis transformation is applied. \cite{Melendez2021} perform their Brownian motion-based basis transformation on the measurements from each condition separately which effectively assumes a difference in the measurements---that is, preprocessing in this manner is under the alternative. We perform an FPCA on the joint sample. By ignoring condition in our prepreprocessing, we are performing the basis transformation assuming no difference, i.e. under the null. Regardless, the univariate sign test is then applied to the summary statistic $t_n[d_i(\mathcal{S})]$ resulting in a test statistic of
	\begin{align*}
		U_n^+ = \sum_{i=1}^n 1\left[ t_n\{g_i(\mathcal{S})\} > 0 \right],
	\end{align*}
	which has the same null distribution as $U_r^+$.

	\section{Signed Doubly Ranked Tests}
	\label{s:sdrt}
	
	As in the univariate case, the functional sign tests in Section~\ref{s:sign} do not account for the magnitude of the difference, only the sign of the difference. To accommodate for the magnitude, we now extend the doubly ranked test, first proposed by \cite{Meyer2025} for independent samples, to the dependent samples (or paired) case. The null and alternative for signed doubly ranked tests is the same as in the functional sign test. Thus, $H_0 : \Lambda = 0$ versus $H_A : \Lambda \neq 0$. To evaluate $H_0$, we first preprocess the functions ignoring measurement occurrence and construct the differences, $d_i(s)$; second, we rank and sign the $d_i(s)$ at each $s$; third, we summarize the signed ranks over time with a statistic, $t[d_i(\mathcal{S})]$; and forth, we perform the Wilcoxon signed rank test on the summaries from step three. Since the last step ranks the summarized ranks, this test falls within the doubly ranked testing framework. 
	
%	In Step 2), we rank the absolute value of $d_i(s)$ so that the ranking procedure is consistent with the signed rank test.
	
	Let $\mathcal{U}(a,b)$ denote the discrete uniform distribution on the interval $[a,b]$, $\mathcal{R}(p)$ denote the Rademacher distribution with probability $p$, $\mathcal{C}(\textbf{p})$ denote the categorical distribution with probability vector $\textbf{p}$, and $R|x|$ denote the rank of the absolute value of $x$ with respect to the sample of $x$'s. At a specific value of $s$, we make the following set of assumptions:
	\begin{assumption}\label{a:unif}
		$R|d_{i}(s)| \stackrel{H_0}{\sim} \mathcal{U}\{1,n\}$.
	\end{assumption}

	\begin{assumption}\label{a:rad}
		In the absence of zeros, sign$[d_i(s)]$ $\stackrel{H_0}{\sim} \mathcal{R}(1/2)$.
	\end{assumption}

	\begin{assumption}\label{a:cat}
		In the presence of zeros, sign$[d_i(s)]$ $\stackrel{H_0}{\sim} \mathcal{C}(\textbf{p})$ where $\textbf{p} = \left( \begin{array}{ccc} \frac{1}{3} & \frac{1}{3} & \frac{1}{3} \end{array}\right)'$.
	\end{assumption}

	\begin{assumption}\label{a:ind}
		Under $H_0$, sign$[d_i(s)] \perp R|d_{i}(s)|$.
	\end{assumption}
	
	At a fixed $s$, the observed $d_i(s)$ is a scalar measurement.	Similar assumptions are made, specifically Assumptions~\ref{a:unif}, ~\ref{a:rad}, and~\ref{a:ind}, in the univariate signed rank test. Given these assumptions, we make the following remarks:

	\begin{remark}\label{rm:exr}
		Under $H_0$ and in the absence of zeros, $E\left[ \text{sign}\{d_i(s)\} \right] = 0$.
	\end{remark}

	\begin{remark}\label{rm:exc}
		Under $H_0$ and in the presence of zeros, $E\left[ \text{sign}\{d_i(s)\} \right] = \textbf{p}$.
	\end{remark}

	Remarks~\ref{rm:exr} and~\ref{rm:exc} are direct results of Assumptions~\ref{a:rad} and~\ref{a:cat}, respectively. To summarize the signed ranks across $\mathcal{S}$, we propose the following statistics:
\begin{align}
	t_d[d_i(\mathcal{S})] &= \frac{1}{S} \sum_{s=1}^S \text{sign}[d_i(s)] R|d_i(s)| \text{ and}\label{eq:nozeros}\\
	t^{\dagger}_d[d_i(\mathcal{S})] &= \frac{1}{S} \sum_{s=1}^S \textbf{w}'\text{sign}\{d_i(s)\} R|d_i(s)|,\label{eq:zeros}
\end{align}
where $t_d[d_i(\mathcal{S})]$ assumes no zeros and $t^{\dagger}_d[d_i(\mathcal{S})]$ allows for zeros to occur. The vector $\textbf{w}$ contains real-valued weights. 

	\begin{remark}\label{rm:wts}
		Setting $\textbf{w} =  \left(\begin{array}{c}
						-1 \\ 0 \\ 1 \\
		\end{array}\right)$ corresponds to setting the weights equal to the support of sign$[d_i(s)]$ in the presence of zeros.
	\end{remark}

	\begin{proposition}\label{prop:ext}
		Under $H_0$, $E\left[t_d\left\{d_i(\mathcal{S})\right\}\right] = 0$.
	\end{proposition}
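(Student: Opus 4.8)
The plan is to exploit the independence assumption together with the zero-mean property of the sign. Since $t_d[d_i(\mathcal{S})]$ is a finite average, linearity of expectation immediately gives
\begin{align*}
	E\left[t_d\{d_i(\mathcal{S})\}\right] = \frac{1}{S} \sum_{s=1}^S E\left[\text{sign}\{d_i(s)\}\, R|d_i(s)|\right],
\end{align*}
so the claim reduces to showing that each summand vanishes under $H_0$.

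First I would fix $s$ and apply Assumption~\ref{a:ind}, which asserts that under $H_0$ the sign and the rank of the absolute value are independent. Independence lets me factor the expectation of the product into a product of expectations,
\begin{align*}
	E\left[\text{sign}\{d_i(s)\}\, R|d_i(s)|\right] = E\left[\text{sign}\{d_i(s)\}\right]\, E\left[R|d_i(s)|\right].
\end{align*}
Because the statistic $t_d$ is the no-zeros version, the relevant sign distribution is the symmetric Rademacher law of Assumption~\ref{a:rad}, whose mean is zero by Remark~\ref{rm:exr}. Hence $E\left[\text{sign}\{d_i(s)\}\right] = 0$, and since $R|d_i(s)|$ takes values in $\{1,\ldots,n\}$ its expectation is finite, so the product is zero for every $s$.

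Summing the vanishing terms and dividing by $S$ yields the result. The argument is essentially a direct chain of the stated assumptions, so I do not anticipate a genuine obstacle; the one point that must be handled with care is the legitimacy of the independence factorization, which is exactly what Assumption~\ref{a:ind} supplies, together with the finiteness of $E\left[R|d_i(s)|\right]$, which is guaranteed by the boundedness of ranks. If one wished to avoid invoking independence directly, an alternative would be to condition on $R|d_i(s)|$ and apply the tower property, but this ultimately relies on the same symmetry of the sign under $H_0$ that Remark~\ref{rm:exr} encodes.
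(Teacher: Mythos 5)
Your proposal is correct and follows exactly the paper's own argument: linearity of expectation over the grid, factorization of each summand via the independence in Assumption~\ref{a:ind}, and vanishing of $E\left[\text{sign}\{d_i(s)\}\right]$ from Assumption~\ref{a:rad} and Remark~\ref{rm:exr}. The extra remarks on finiteness of the rank's expectation and the conditioning alternative are fine but not needed beyond what the paper states.
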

	
	\begin{proof}
	Via Assumptions~\ref{a:rad} and~\ref{a:ind} as well as Remark~\ref{rm:exr}, we have
\begin{align*}
	E\left[t_d\left\{d_i(\mathcal{S})\right\}\right] 
%	&= E\left[ \frac{1}{S} \sum_{s=1}^S \text{sign}\{d_i(s)\} R|d_i(s)| \right]\\
	&=  \frac{1}{S} \sum_{s=1}^S E\left[ \text{sign}\{d_i(s)\} R|d_i(s)| \right]\\
	&=  \frac{1}{S} \sum_{s=1}^S E\left[ \text{sign}\{d_i(s)\}\right] E\left[  R|d_i(s)| \right]\\
	&= 0 
\end{align*}		
	\end{proof}

	\begin{proposition}\label{prop:extt}
		Under $H_0$, $E\left[t^{\dagger}_d\left\{d_i(\mathcal{S})\right\}\right] = 0$.
	\end{proposition}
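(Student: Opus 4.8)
The plan is to mirror the proof of Proposition~\ref{prop:ext} almost verbatim, replacing the scalar sign with its vector-valued (indicator) counterpart and carrying the deterministic weight vector $\textbf{w}$ through the expectation. First I would invoke linearity of expectation to move the factor $1/S$ and the constant vector $\textbf{w}'$ outside the sum, so that the quantity to evaluate becomes $\frac{1}{S}\sum_{s=1}^S \textbf{w}'\, E\!\left[\text{sign}\{d_i(s)\}\, R|d_i(s)|\right]$, where here $\text{sign}\{d_i(s)\}$ is the indicator vector whose expectation is a probability vector.

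Second, I would apply the independence assumption (Assumption~\ref{a:ind}) to factor the expectation of the product into a product of expectations, namely $E\!\left[\text{sign}\{d_i(s)\}\, R|d_i(s)|\right] = E\!\left[\text{sign}\{d_i(s)\}\right] E\!\left[R|d_i(s)|\right]$, the first factor being vector-valued and the second scalar. Third, I would substitute $E\!\left[\text{sign}\{d_i(s)\}\right] = \textbf{p}$ via Remark~\ref{rm:exc} (which itself follows from Assumption~\ref{a:cat}), so that each summand collapses to $\textbf{w}'\textbf{p}\, E\!\left[R|d_i(s)|\right]$.

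The final and only substantive step is to observe that the inner product $\textbf{w}'\textbf{p}$ vanishes. With the null probability vector $\textbf{p} = \left(\frac{1}{3}, \frac{1}{3}, \frac{1}{3}\right)'$ from Assumption~\ref{a:cat} and the weights $\textbf{w} = (-1, 0, 1)'$ from Remark~\ref{rm:wts}, one computes $\textbf{w}'\textbf{p} = \frac{1}{3}(-1 + 0 + 1) = 0$, so every term in the sum is zero irrespective of the (finite) value of $E\!\left[R|d_i(s)|\right]$, giving the claim.

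I expect the main point to flag—more a caveat than a genuine obstacle—to be that the conclusion hinges entirely on this orthogonality $\textbf{w}'\textbf{p} = 0$, which holds because the null probabilities are symmetric and the chosen weights sum to zero. An asymmetric choice of $\textbf{w}$, or any $\textbf{w}$ whose entries do not sum to zero, would break the result. It is therefore worth stating explicitly in the proof that the recommended weights of Remark~\ref{rm:wts} are precisely what force the expectation to zero under Assumption~\ref{a:cat}.
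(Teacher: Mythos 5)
Your proposal is correct and follows essentially the same route as the paper's own proof: pull out $1/S$ and $\textbf{w}'$ by linearity, factor the expectation via Assumption~\ref{a:ind}, substitute $E\left[\text{sign}\{d_i(s)\}\right] = \textbf{p}$ via Remark~\ref{rm:exc}, and conclude from $\textbf{w}'\textbf{p} = 0$ with $\textbf{w}$ as in Remark~\ref{rm:wts} and $\textbf{p}$ as in Assumption~\ref{a:cat}. Your closing observation that the result hinges on the weights summing to zero against the symmetric null probabilities is a useful clarification, but it is the same argument the paper makes implicitly in its final sentence.
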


	\begin{proof}
	Via Assumptions~\ref{a:cat} and~\ref{a:ind} as well as Remarks~\ref{rm:exc} and~\ref{rm:wts}, we have
\begin{align*}
	E\left[t^{\dagger}_d\left\{d_i(\mathcal{S})\right\}\right] 
%	&= E\left[ \frac{1}{S} \sum_{s=1}^S \text{sign}\{d_i(s)\} R|d_i(s)| \right]\\
%	&= \frac{1}{S} \sum_{s=1}^S E\left[\textbf{w}'\text{sign}\{d_i(s)\}  R|d_i(s)| \right]\\
%	&= \frac{1}{S} \sum_{s=1}^S E\left[\textbf{w}'\text{sign}\{d_i(s)\} \right] E\left[ R|d_i(s)| \right]\\
	&= \frac{1}{S} \sum_{s=1}^S \textbf{w}' E\left[\text{sign}\{d_i(s)\} \right] E\left[ R|d_i(s)| \right]\\
	&= \frac{1}{S} \sum_{s=1}^S \textbf{w}' \textbf{p } E\left[ R|d_i(s)| \right]\\
	&= 0 
\end{align*}
		The final step holds since the inner product of $\textbf{w}$ and $\textbf{p}$ is zero when $\textbf{w}$ is as defined in Remark~\ref{rm:wts} and $\textbf{p}$ is as defined in Assumption~\ref{a:cat}.
	\end{proof}
	
	Thus, both summaries maintain the properzeros of the signed ranks under $H_0$. These summaries become the data in Step 4) of the doubly ranked procedure. Since they are one-sample summaries, we apply the Wilcoxon signed rank test. The signed doubly ranked test statistics are then either
\begin{align*}
	W = &\frac{1}{2} \sum_{i=1}^n \text{sign}\left[ t_d\left\{d_i(\mathcal{S})\right\} \right] R\left| t_d\left\{d_i(\mathcal{S})\right\} \right| \text{ or}\\
	W^{\dagger} = &\frac{1}{2} \sum_{i=1}^n \text{sign}\left[ t^{\dagger}_d\left\{d_i(\mathcal{S})\right\} \right] R\left| t^{\dagger}_d\left\{d_i(\mathcal{S})\right\} \right|.
%	W^+_{f,r} = &\frac{1}{2} \sum_{i=1}^n \text{sign}\left[t_{f,r}\left\{z_{i}(\mathcal{S})\right\}\right] R\left| t_{f,r}\left[z_{i}(\mathcal{S})\right] \right|  + \frac{n(n+1)}{4},
%	\text{ or}\\
%	W^+_{f,n} = &\frac{1}{2} \sum_{i=1}^n \text{sign}\left[t_{f,n}\left\{z_{i}(\mathcal{S})\right\}\right] R\left| t_{f,n}\left[z_{i}(\mathcal{S})\right] \right| \\
%	&+ \frac{n(n+1)}{4},
\end{align*}
The distributions of $W$ and $W^{\dagger}$ are centered at zero. Since both $t_d\left\{d_i(\mathcal{S})\right\}$ and $t^{\dagger}_d\left\{d_i(\mathcal{S})\right\}$ are scalar summaries of pair-specific ranks, $W$ and $W^{\dagger}$ have the same distribution as the Wilcoxon signed rank test \citep{Wilcoxon1945}.

	\section{Empirical Study}
	\label{s:sim}
	
	We examine both a general empirical study and one motivated by our data illustration. In both, we consider samples of size $n = 15, 30,$ and $60$. We assume the grid lies on the unit interval and take sampling densities of $S = 40, 120,$ and 360 for the general study while using $S = 80$ for the data-based study. To generate pairs of functions, we modify the approach taken by  \cite{Chak2015}, \cite{Berrett2021}, and \cite{Meyer2025} who generate data using a Karhunen-Lo\`eve expansion similar to
	\begin{align*}
		X_{ij}(s) &= \sum_{k=1}^K \sqrt{2} \left[(k - 0.5)\pi \right]^{-1} Z_{ijk} \sin\left[(k - 0.5)\pi s\right],
	\end{align*}
	for $i = 1, \ldots, n$ and $j = 0,1$. Given a single subject, $[\begin{array}{cc} Z_{i0k} & Z_{i1k} \end{array}]'$ is a $2\times1$ vector which we draw from either a multivariate normal or a multivariate $t$ distribution with two degrees of freedom, both centered at $[\begin{array}{cc} 0 & 0 \end{array}]'$. To induce dependency, we set the scale matrix to $\left[ \begin{array}{cc}
				1 & \rho \\
				\rho & 1
			\end{array}
		\right],$
	where $\rho = 0.5$ or 0.75 to mimic moderate and strong within subject correlation while the data-based simulation sets $\rho = 2/3$. The $\rho$ for the data-based study is close to the estimated lag 1 auto-correlation in the smoothed data. Using the multivariate normal induces a standard Brownian motion while using the multivariate $t$ produces a $t$-process. Similar to \cite{Meyer2025}, we take the value of $K$ to be large, $K = 1000$.

\begin{table} 
	\centering
	\caption{Type I Error for signed doubly ranked tests (SDRT) and the functional sign test (FST) under varying constructions: integral (Int.) and sufficient statistic (Suff.). {\bf Bolded} values indicate closest to nominal. \label{t:fdr}}
	\begin{tabular}{llllccc}
		  \hline
		  \multirow{2}{*}{$\rho$} & \multirow{2}{*}{$S$} & \multirow{2}{*}{$Z_{ijk}$} & \multirow{2}{*}{$n$} &  \multirow{2}{*}{SDRT} &  \multicolumn{2}{c}{FST}    \\
		 \cline{6-7}
		&  &  &  &  & Int. & Suff. \\ 
		  \hline
		$0.5$ & 40 & G & 15 & \bf 0.0471 & 0.0350 & 0.0274 \\
		 & & & 30 & \bf 0.0472 & 0.0452 & 0.0341 \\
		 & & & 60 & \bf 0.0508 & 0.0292 & 0.0401  \\
		\cline{4-7}
		& & T  & 15 & \bf 0.0462 & 0.0371 & 0.0282 \\
		 & & & 30 & \bf 0.0496 & 0.0439 & 0.0328 \\
		 & & & 60 & \bf 0.0474 & 0.0291 & 0.0382  \\
		 \cline{3-7}
		& 120 & G & 15 & \bf 0.0488 & 0.0366 & 0.0298 \\
		 & & & 30 & \bf 0.0492 & 0.0419 & 0.0406  \\
		 & & & 60 & \bf 0.0533 & 0.0283 & 0.0326  \\
		\cline{4-7}
		& & T  & 15 & \bf 0.0451 & 0.0360 & 0.0333  \\
		 & & & 30 & \bf 0.0508 & 0.0448 & 0.0392 \\
		 & & & 60 & \bf 0.0492 & 0.0276 & 0.0343  \\
		 \cline{3-7}
		& 360 &  G & 15 & \bf 0.0496 & 0.0334 & 0.0331 \\
		 & & & 30 & \bf 0.0495 & 0.0426 & 0.0439 \\
		 & & & 60 & \bf 0.0503 & 0.0282 & 0.0302  \\
		\cline{4-7}
		& & T  & 15 & \bf 0.0473 & 0.0335 & 0.0324 \\
		 & & & 30 & \bf 0.0510 & 0.0432 & 0.0423 \\
		 & & & 60  & \bf 0.0475 & 0.0279 & 0.0321  \\
		\cline{2-7}
		$0.75$ & 40 & G & 15 & \bf 0.0481 & 0.0337 & 0.0234  \\
		 & & & 30 & \bf 0.0485 & 0.0453 & 0.0317  \\
		 & & & 60  & \bf 0.0507 & 0.0299 & 0.0381 \\
		\cline{4-7}
		& & T  & 15 & \bf 0.0467 & 0.0363 & 0.0278 \\
		 & & & 30 & \bf 0.0504 & 0.0441 & 0.0339 \\
		 & & & 60 & \bf 0.0474 & 0.0259 & 0.0402  \\
		 \cline{3-7}
		& 120 & G & 15 & \bf 0.0490 & 0.0365 & 0.0329 \\
		 & & & 30 & \bf 0.0490 & 0.0423 & 0.0387 \\
		 & & & 60 & \bf 0.0530 & 0.0309 & 0.0356  \\
		\cline{4-7}
		& & T  & 15 & \bf 0.0455 & 0.0375 & 0.0328 \\
		 & & & 30 &  0.0492 & \bf 0.0505 & 0.0405 \\
		 & & & 60 & \bf 0.0510 & 0.0285 & 0.0338  \\
		 \cline{3-7}
		& 360 &  G & 15 & \bf 0.0475 & 0.0359 & 0.0334 \\
		 & & & 30 & \bf 0.0492 & 0.0445 & 0.0436 \\
		 & & & 60 & \bf 0.0498 & 0.0265 & 0.0307  \\
		\cline{4-7}
		& & T  & 15 & \bf 0.0475 & 0.0349 & 0.0352 \\
		 & & & 30 & \bf 0.0506 & 0.0446 & 0.0427 \\
		 & & & 60  & \bf 0.0467 & 0.0262 & 0.0297 \\
		 \hline
	\end{tabular}
\end{table}
		
	The final step in generating the data involves adding measurement error and a potentially non-zero shift. Thus, we generate pairs of functional data from $Y_{i0}(s) = X_{i0}(s) + \epsilon_{i0}(s)$ and $Y_{i1}(s) = \Delta(s) + X_{i1}(s) + \epsilon_{i1}(s)$ where $\Delta(s)$ is either set to zero to assess type I error or has one of two functional forms:
%	\begin{align*}
		$\Delta_1(s) = \xi s$ and $\Delta_2(s) = \xi4s(1-s)$.
%		, \text{ and}\\
%		\Delta_3(s) = \xi m^{-1} &B(2, 6)^{-1} s^{2-1} (1-s)^{6-1}.
%	\end{align*}
	To evaluate power, we let $\xi$ take on a value between 0.12 and 3, incrementing by 0.12. These functions are similar to those evaluated by \cite{Chak2015} and \cite{Meyer2025} in the independent, two sample case. The error terms, $\epsilon_{i0}(s)$ and $\epsilon_{i1}(s)$, mimic potential error at each measurement occurrence and come from an AR(1) process with correlation set to 0.5 and variance set to 1. For the general study, we preprocess $Y_{i0}(s)$ and $Y_{i1}(s)$ using FACE and retain 99\% of the variability. Because the data contains some missing values, for the data-based study, we preprocess using FPCA SC---once again retaining 99\% of the variability. 

\begin{table}
	\centering
	\caption{Type I Error when $\rho = 2/3$ and $S = 80$ for the data-based simulation using with FPCA SC. {\bf Bolded} values indicate the closest value to nominal, $\alpha = 0.05$. SDRT denotes signed doubly ranked test, FST denotes functional sign test, Int. denotes integral, and Suff. denotes sufficient statistic.\label{t:fdrdb}}
	\begin{tabular}{llccc}
		  \hline
		  \multirow{2}{*}{$Z_{ijk}$} & \multirow{2}{*}{$n$} & \multirow{2}{*}{SDRT} &  \multicolumn{2}{c}{FST}    \\
		 \cline{4-5}
		 &  &  & Int. & Suff. \\ 
		  \hline
		 G & 15 & \bf 0.0490 & 0.0361 & 0.0284  \\
		  & 30 & \bf 0.0489 & 0.0418 & 0.0334  \\
		  & 60  & \bf 0.0518 & 0.0264 & 0.0367 \\
		\cline{2-5}
		T  & 15 & \bf 0.0462 & 0.0401 & 0.0321  \\
		 & 30 & \bf 0.0494 & 0.0445 & 0.0395  \\
		 & 60  & \bf 0.0512 & 0.0283 & 0.0357 \\
		 \hline
	\end{tabular}
\end{table}

\begin{figure}
	\centering
	\includegraphics[width =2.1in, height =2.1in]{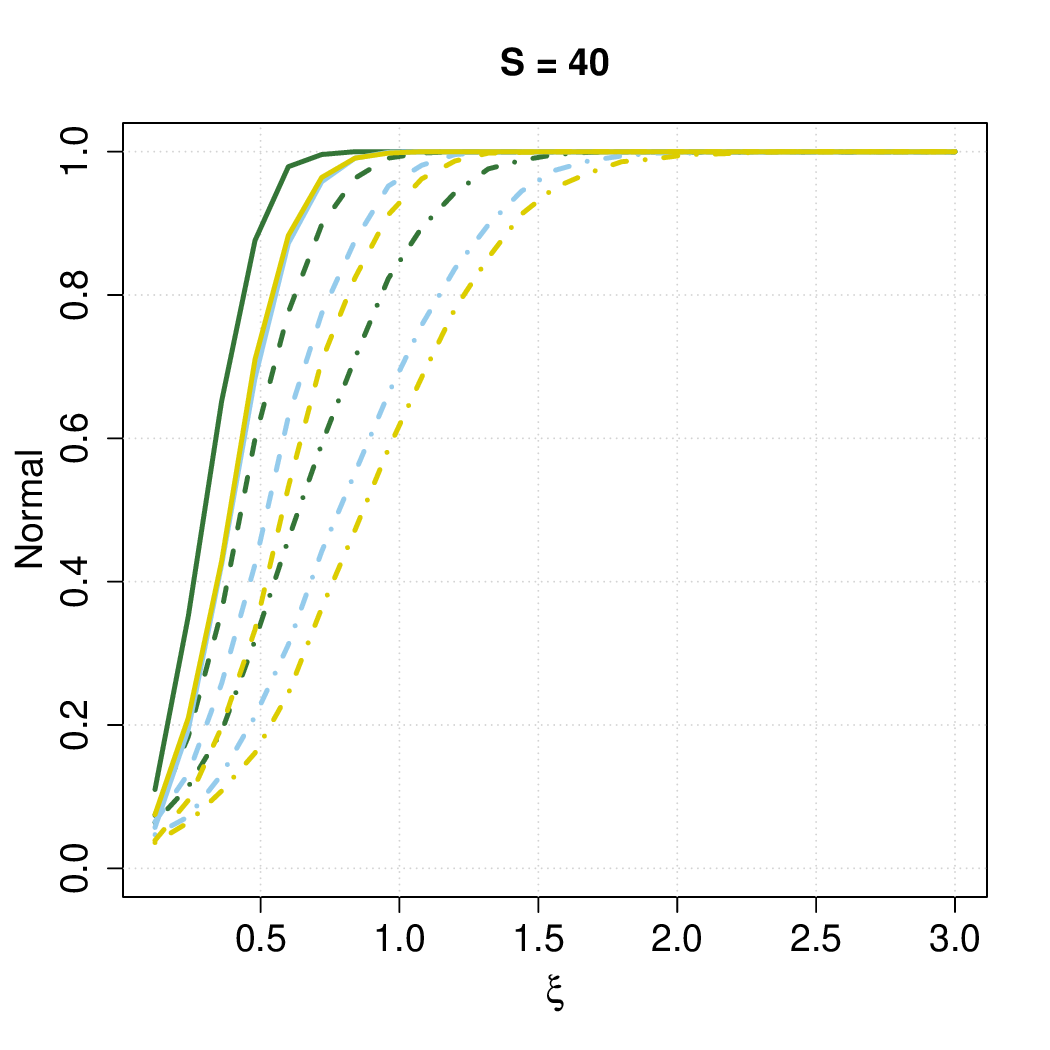}
	\includegraphics[width =2.1in, height =2.1in]{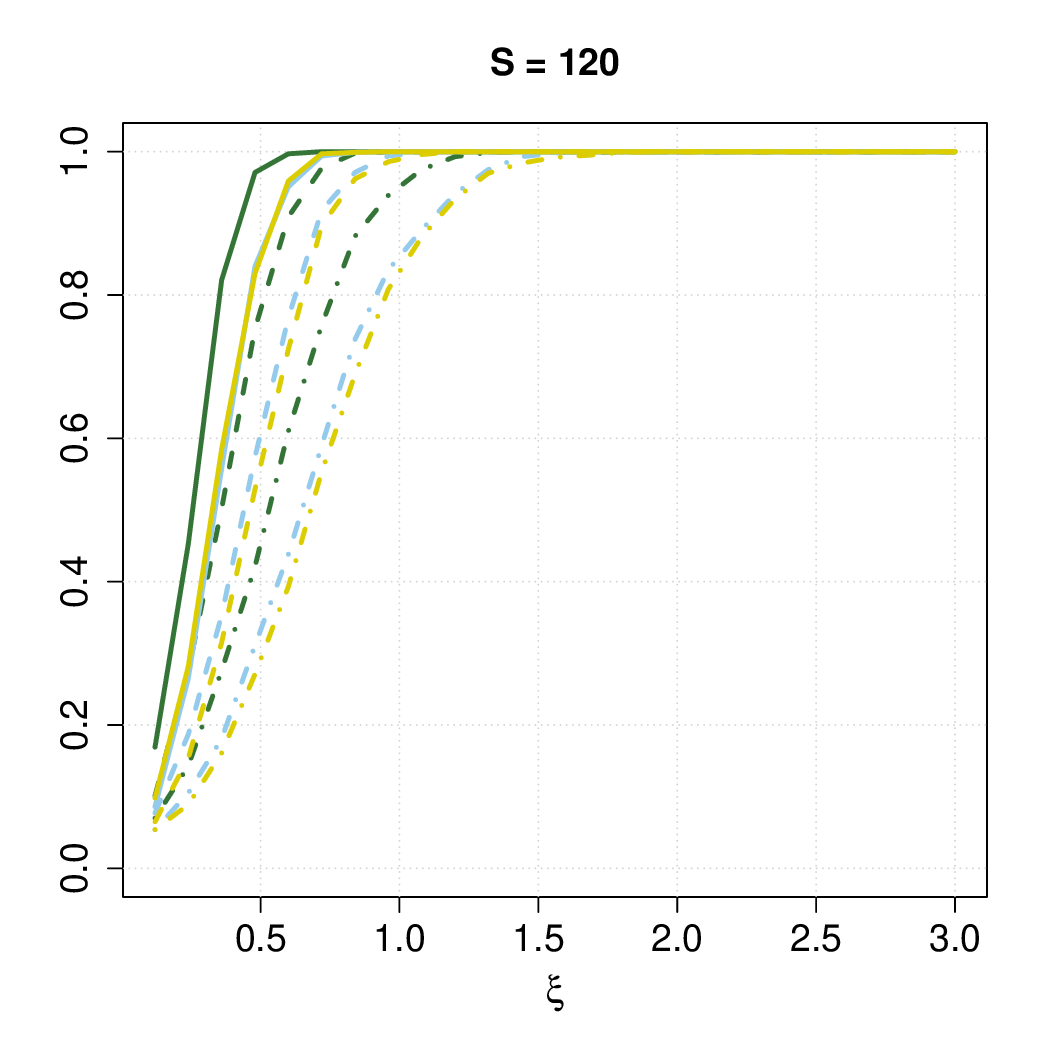}
	\includegraphics[width =2.1in, height =2.1in]{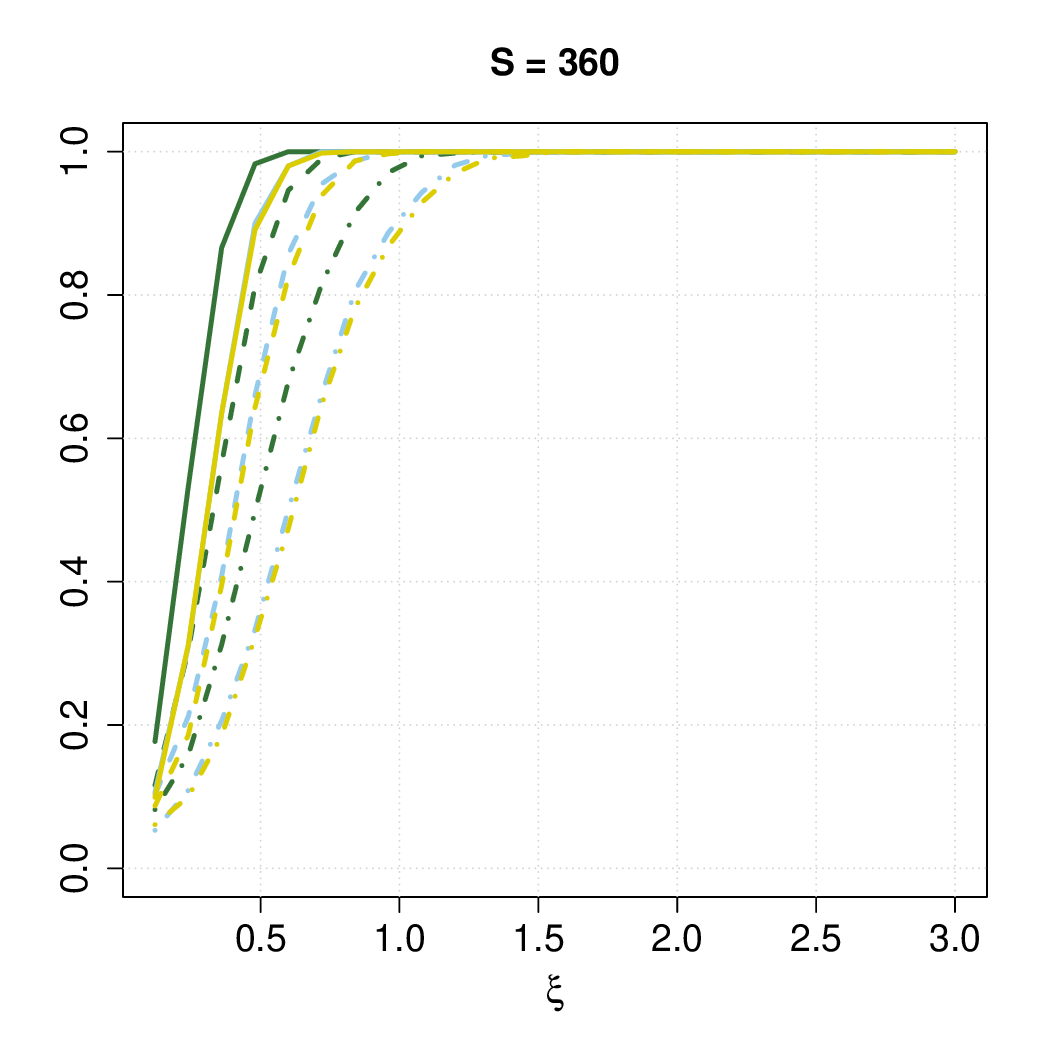}
	\includegraphics[width =2.1in, height =2.1in]{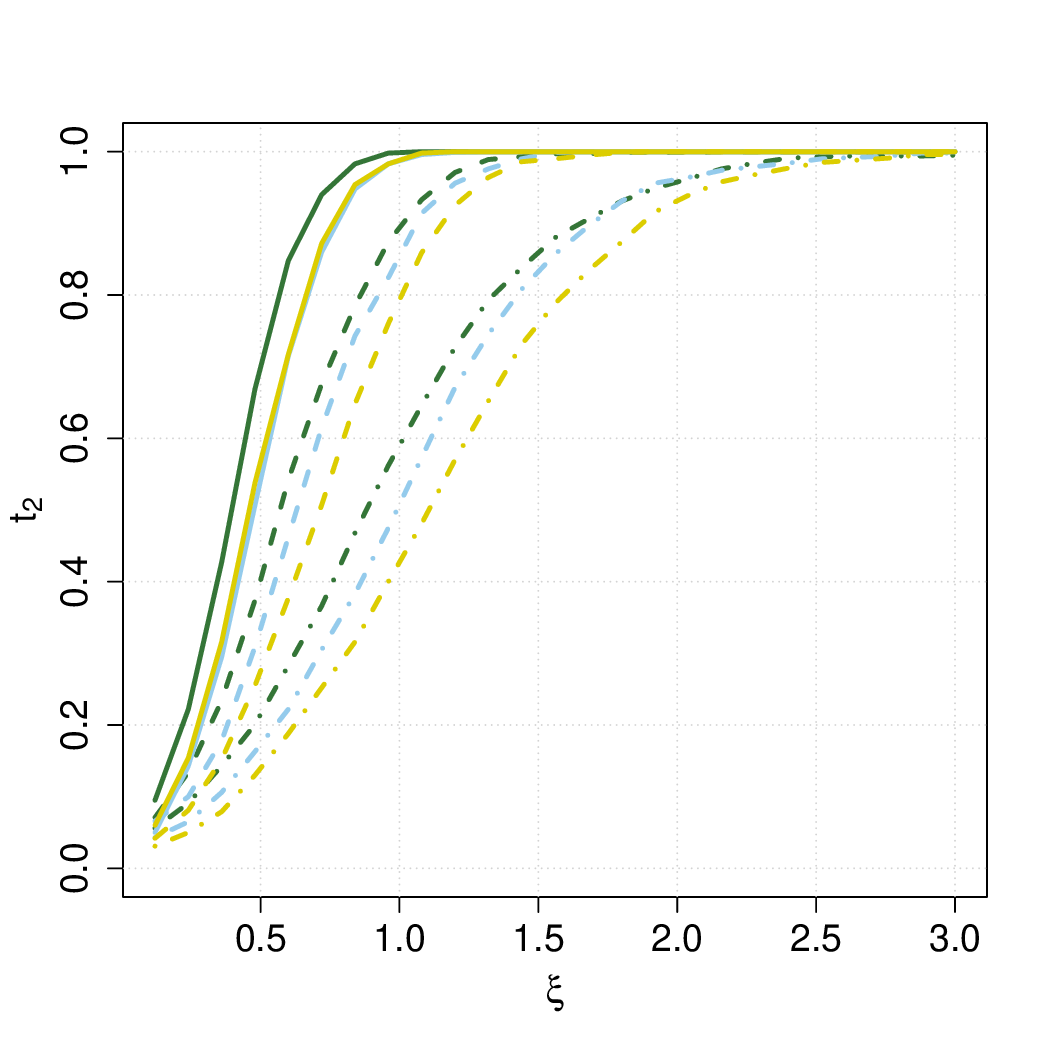}
	\includegraphics[width =2.1in, height =2.1in]{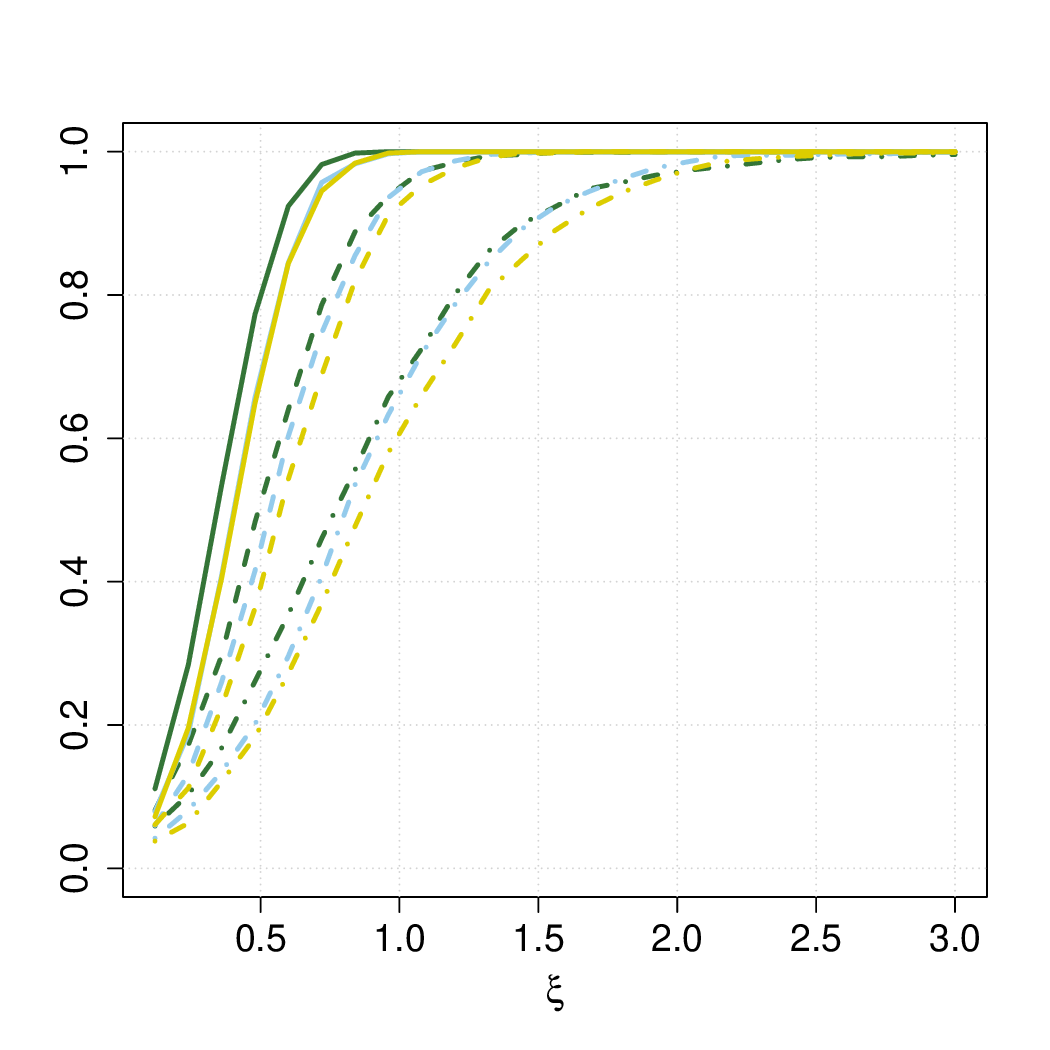}
	\includegraphics[width =2.1in, height =2.1in]{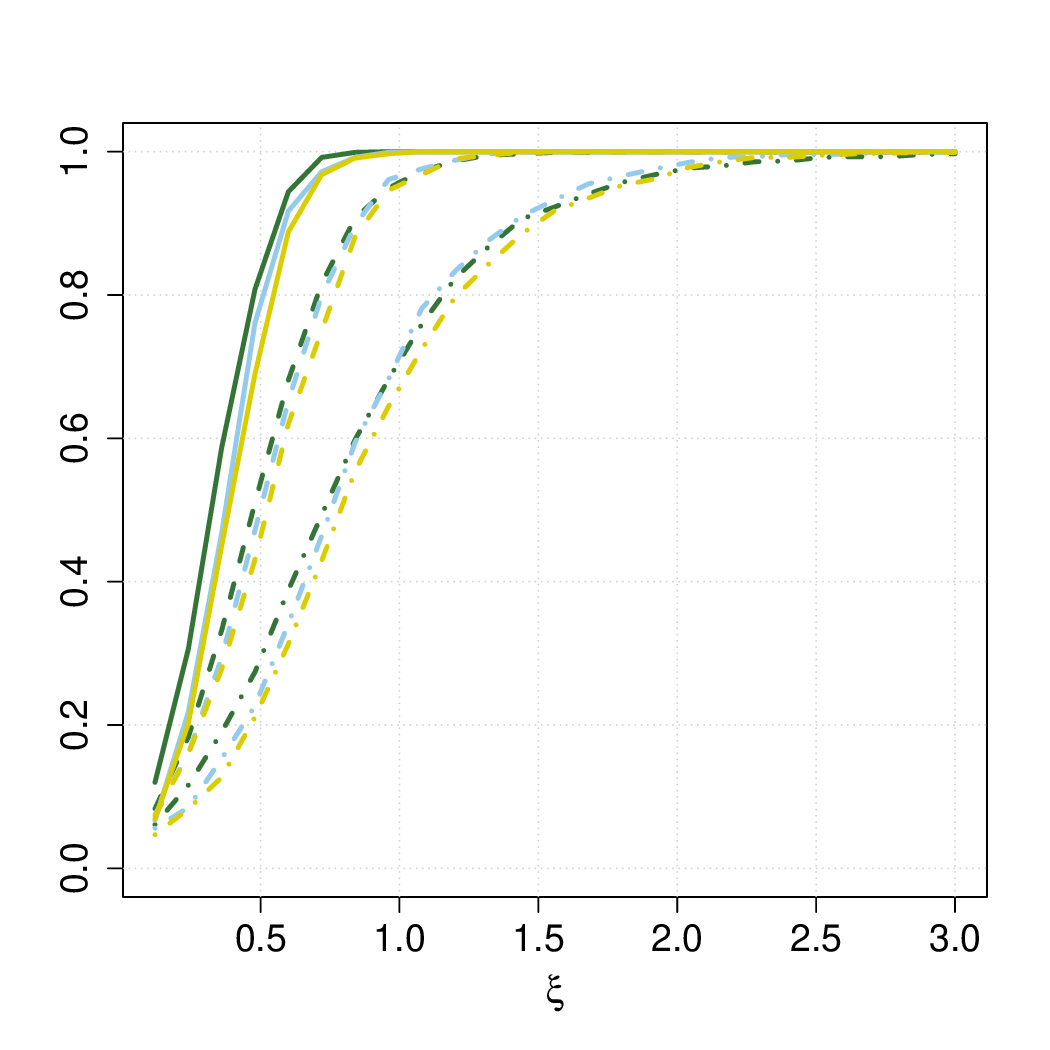}
	\caption{Power curves for functional sign and signed doubly ranked tests under $\Delta_1(s)$ and $\rho = 0.75$. The top row contains results under the multivariate normal, bottom row under $t_2$. Curves for the signed doubly ranked tests are in green, while curves for the functional sign tests are in blue (integral) and gold (sufficient statistic). Solid curves are for when $n = 60$, dashed curves for when $n = 30$, and dotted-dashed curves for when $n = 15$.\label{f:pone}}
\end{figure}

\begin{figure}
	\centering
	\includegraphics[width =2.1in, height =2.1in]{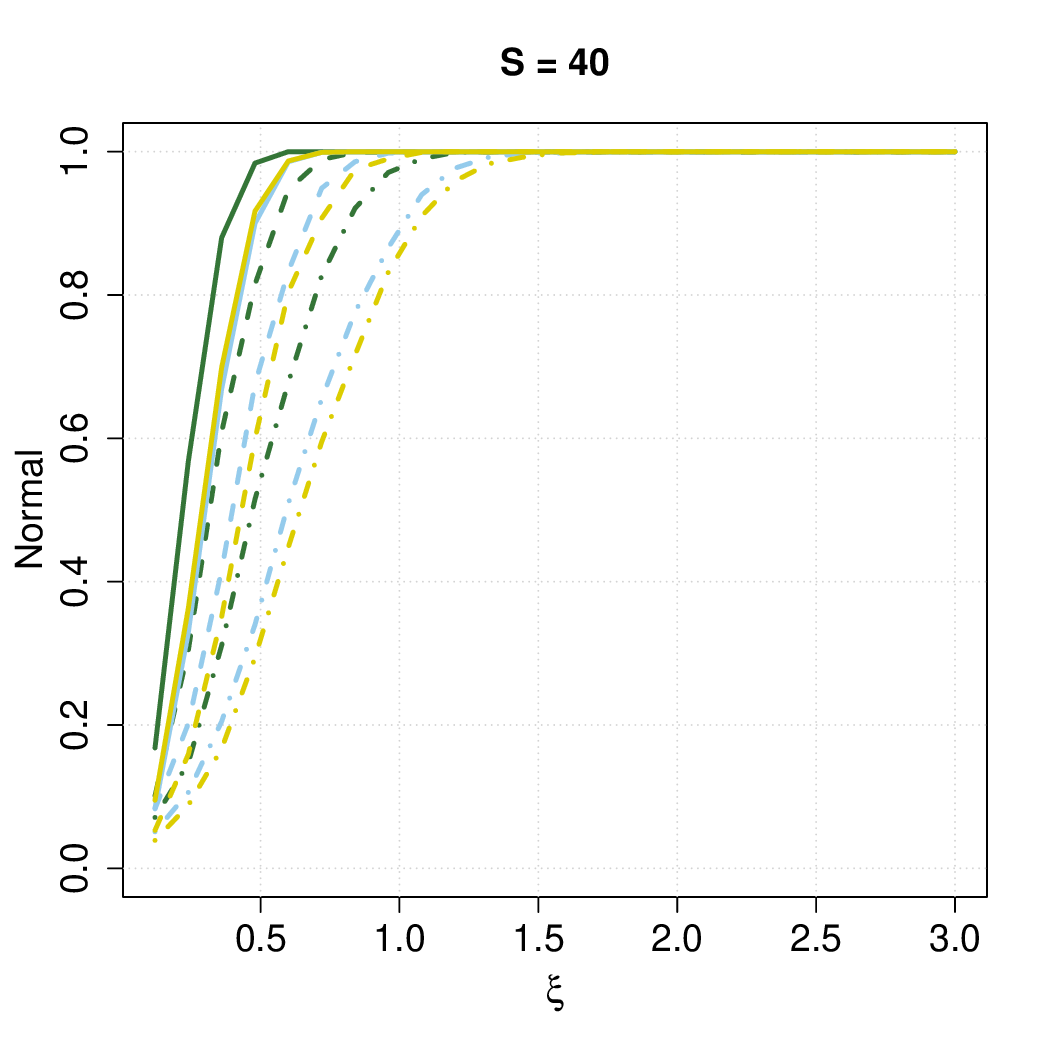}
	\includegraphics[width =2.1in, height =2.1in]{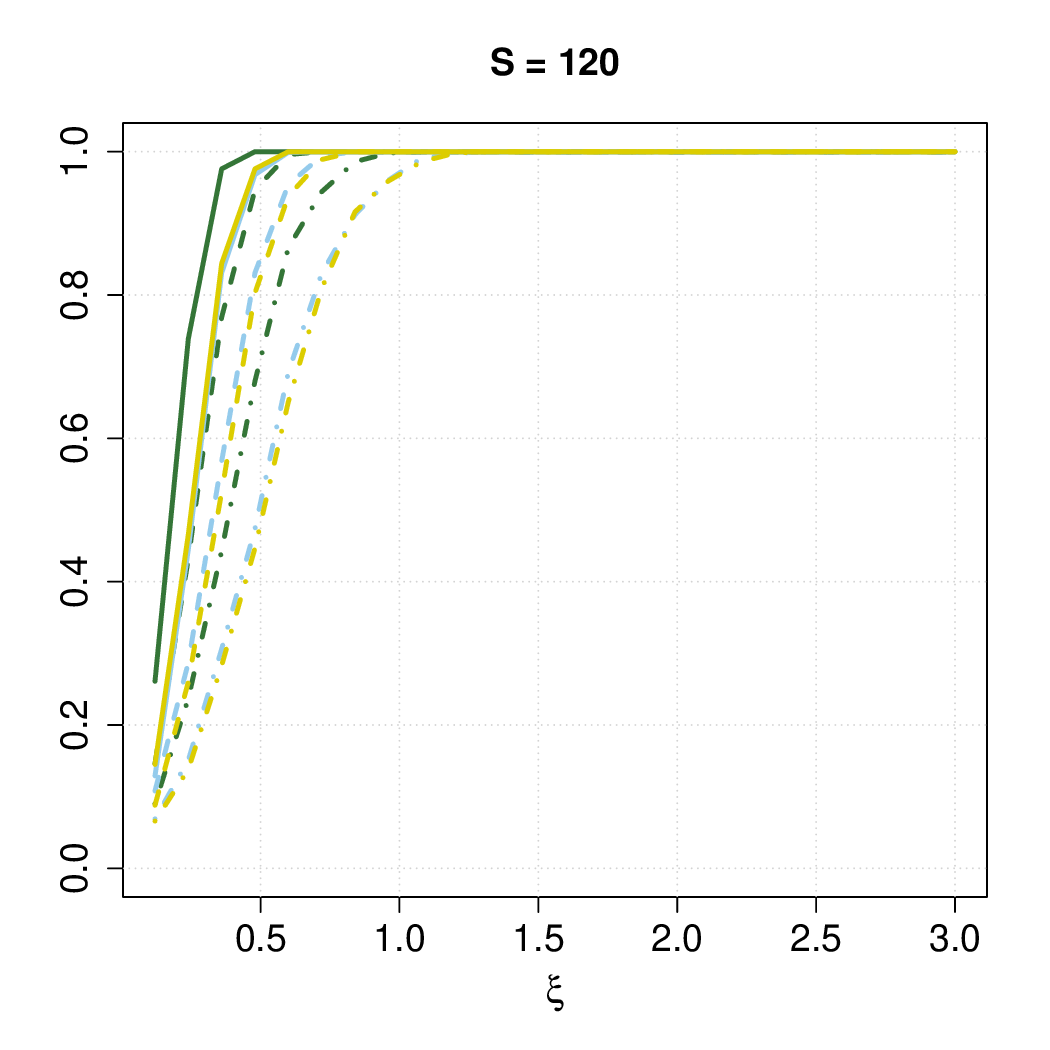}
	\includegraphics[width =2.1in, height =2.1in]{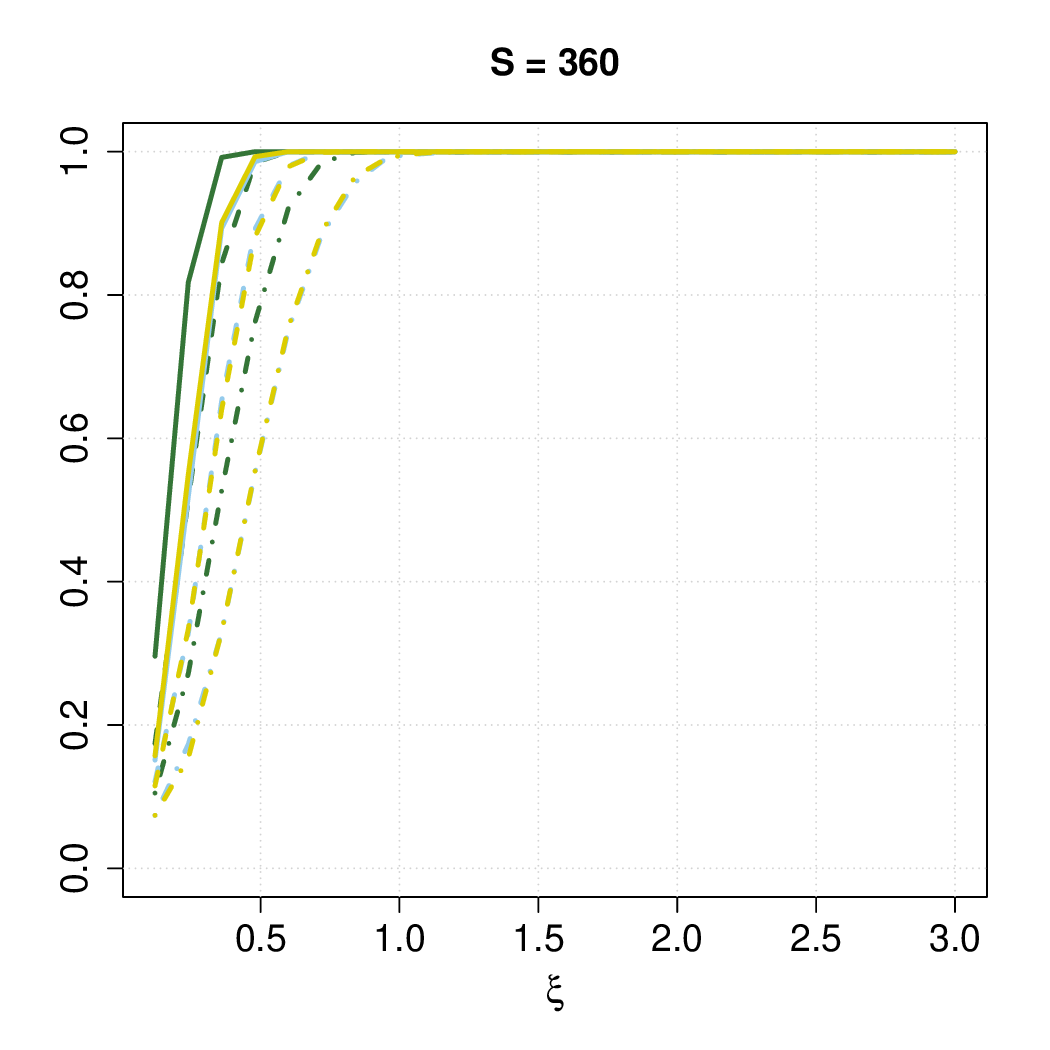}
	\includegraphics[width =2.1in, height =2.1in]{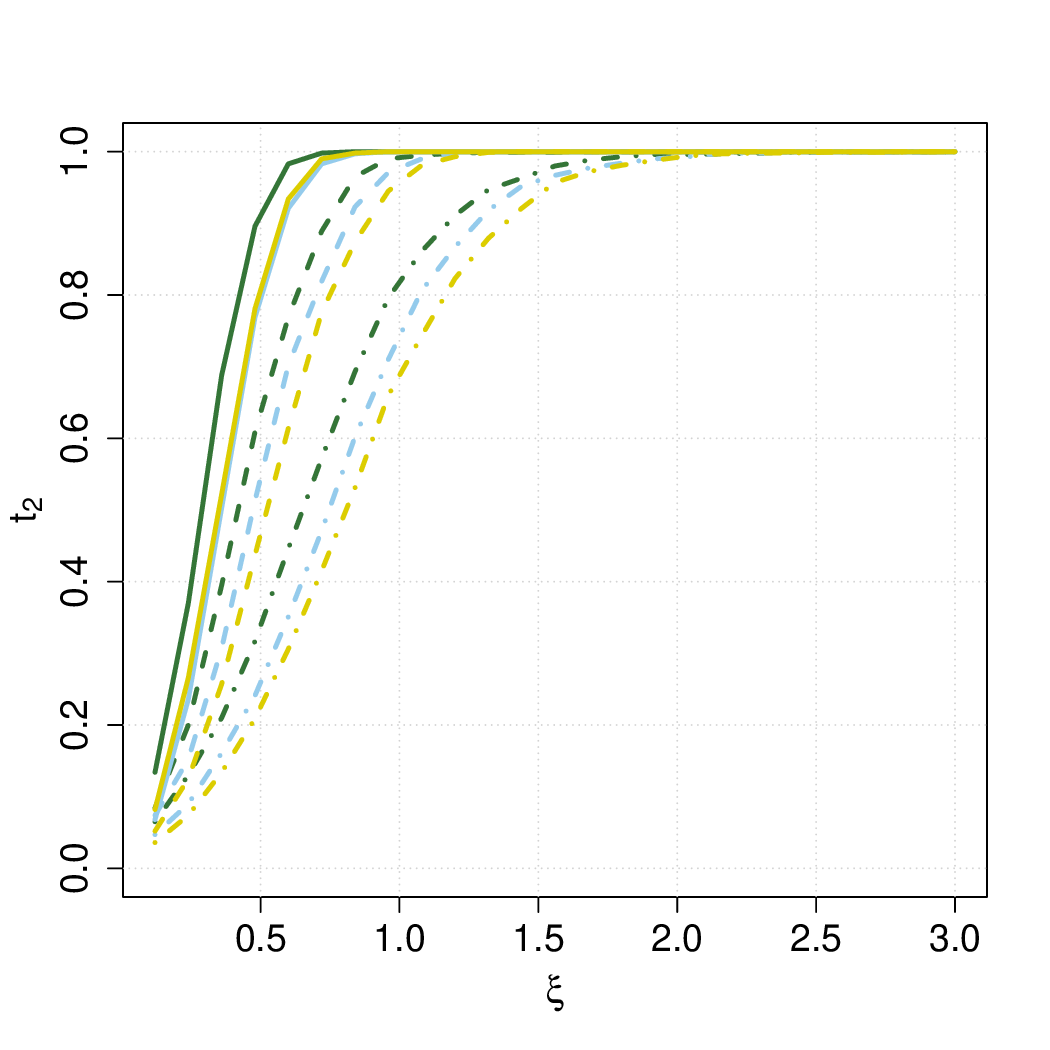}
	\includegraphics[width =2.1in, height =2.1in]{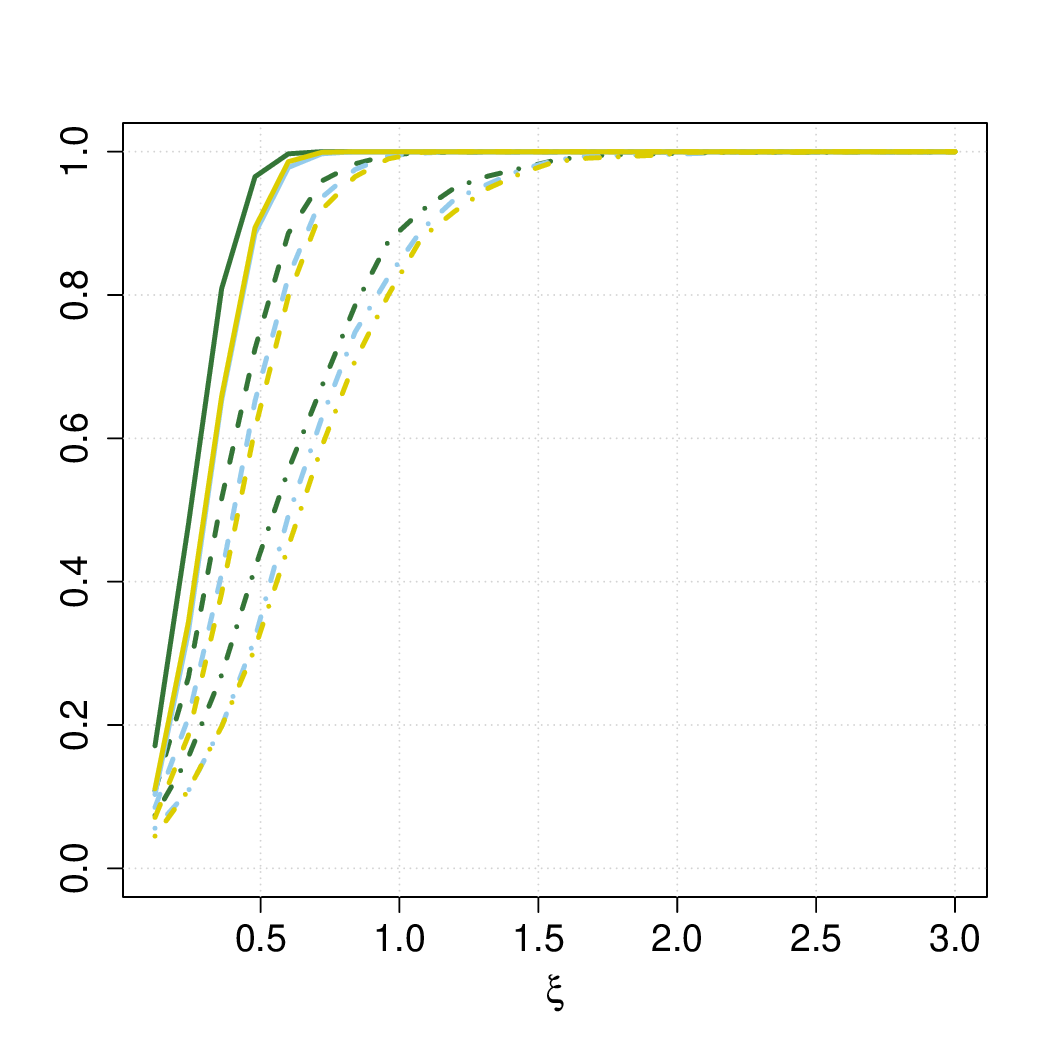}
	\includegraphics[width =2.1in, height =2.1in]{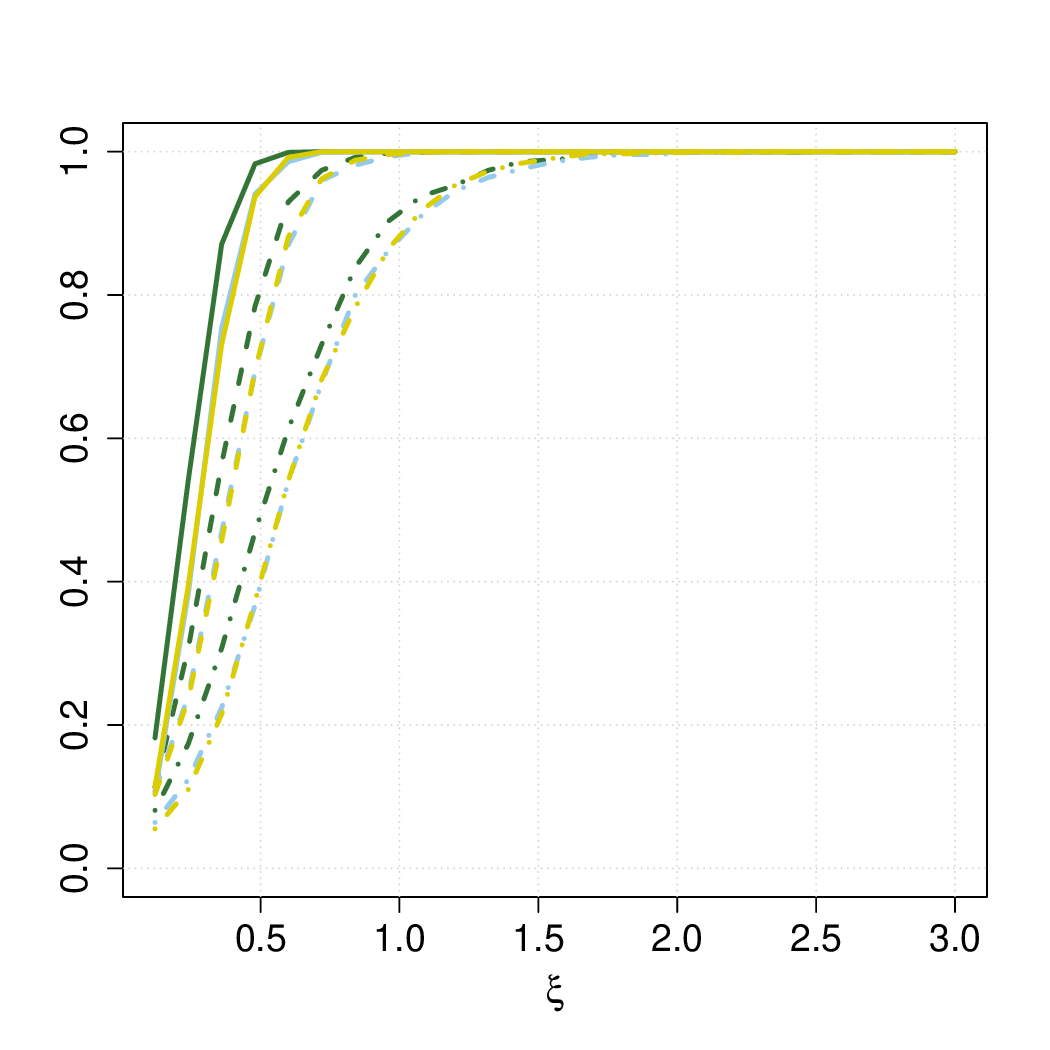}
	\caption{Power curves for functional sign and signed doubly ranked tests under $\Delta_2(s)$ and $\rho = 0.75$. The top row contains results under the multivariate normal, bottom row under $t_2$. Curves for the signed doubly ranked tests are in green, while curves for the functional sign tests are in blue (integral) and gold (sufficient statistic). Solid curves are for when $n = 60$, dashed curves for when $n = 30$, and dotted-dashed curves for when $n = 15$.\label{f:ptwo}}
\end{figure}

\begin{figure}
	\centering
	\includegraphics[width =2.1in, height =2.1in]{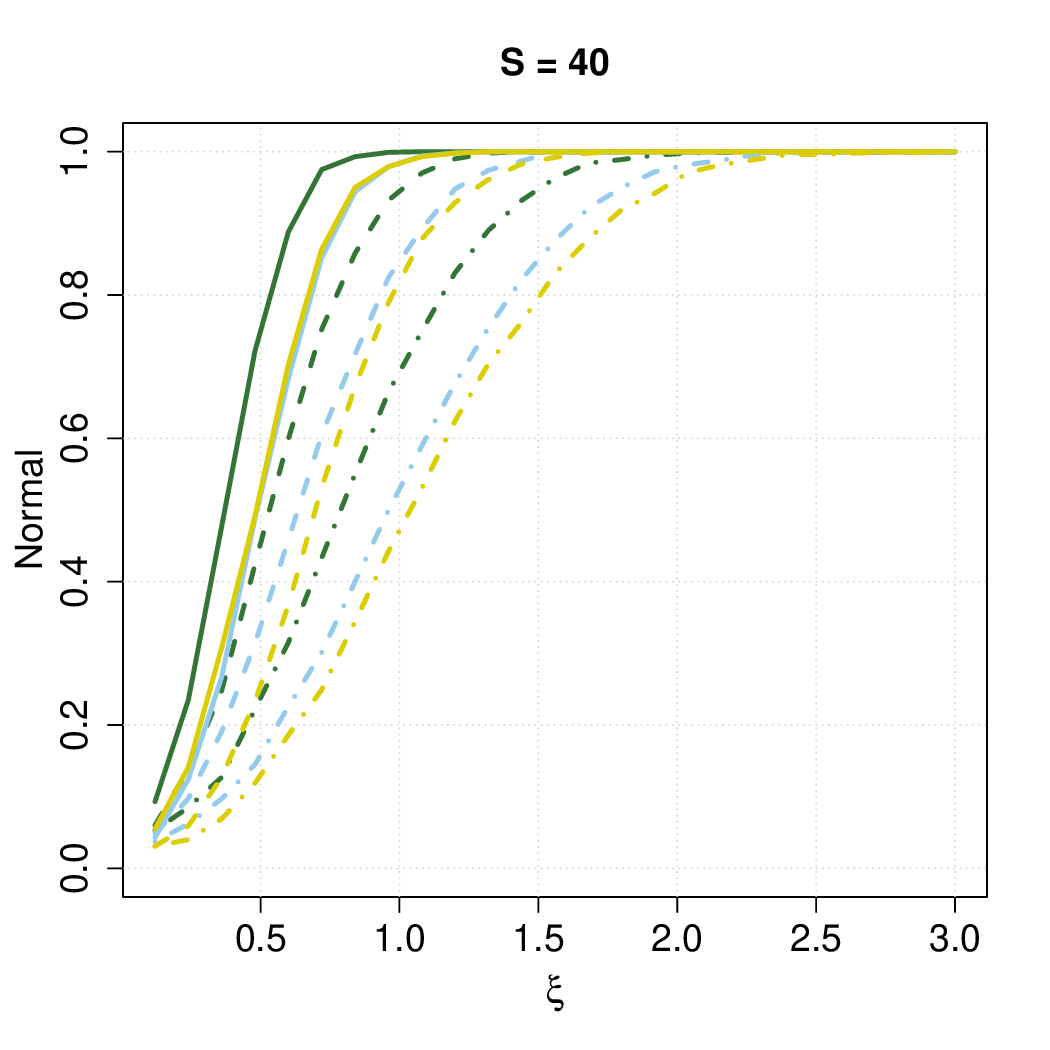}
	\includegraphics[width =2.1in, height =2.1in]{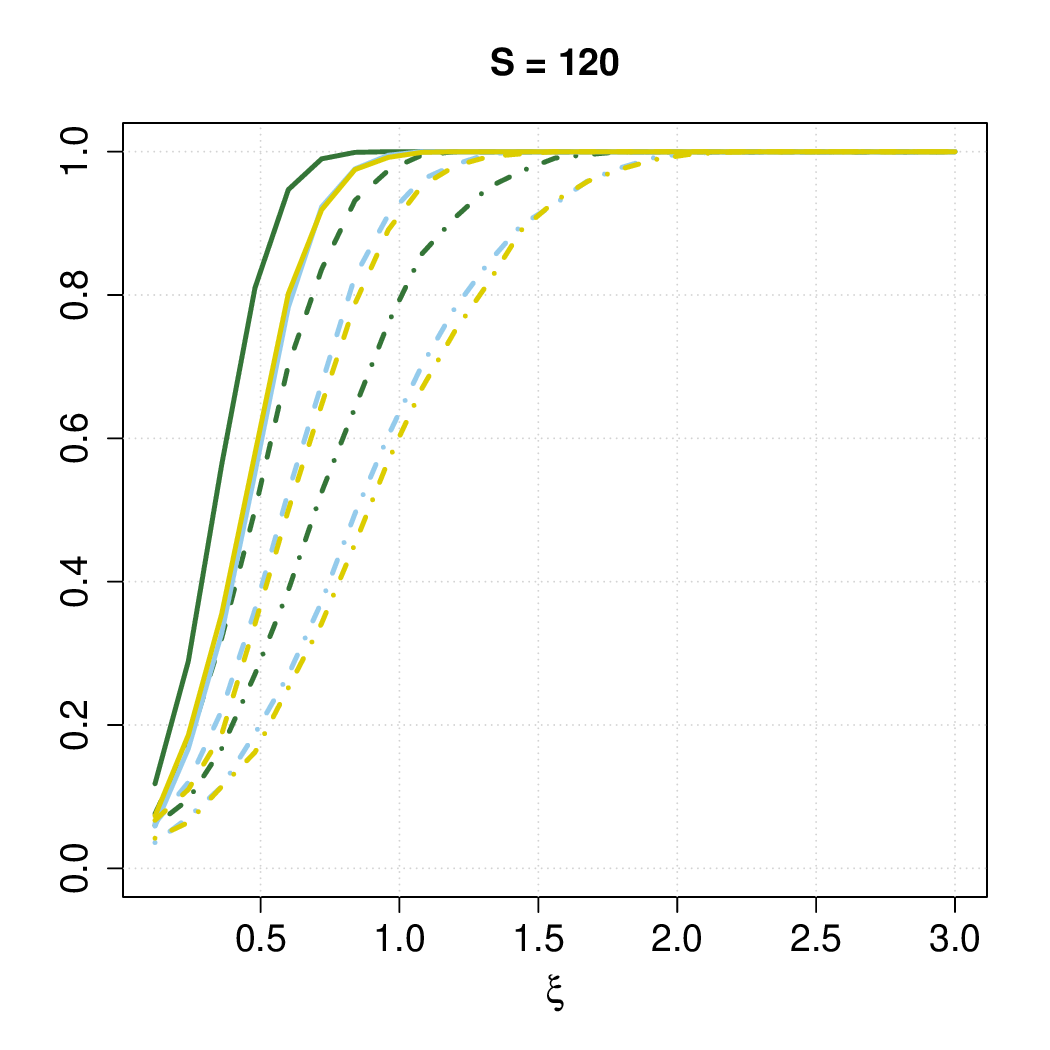}
	\includegraphics[width =2.1in, height =2.1in]{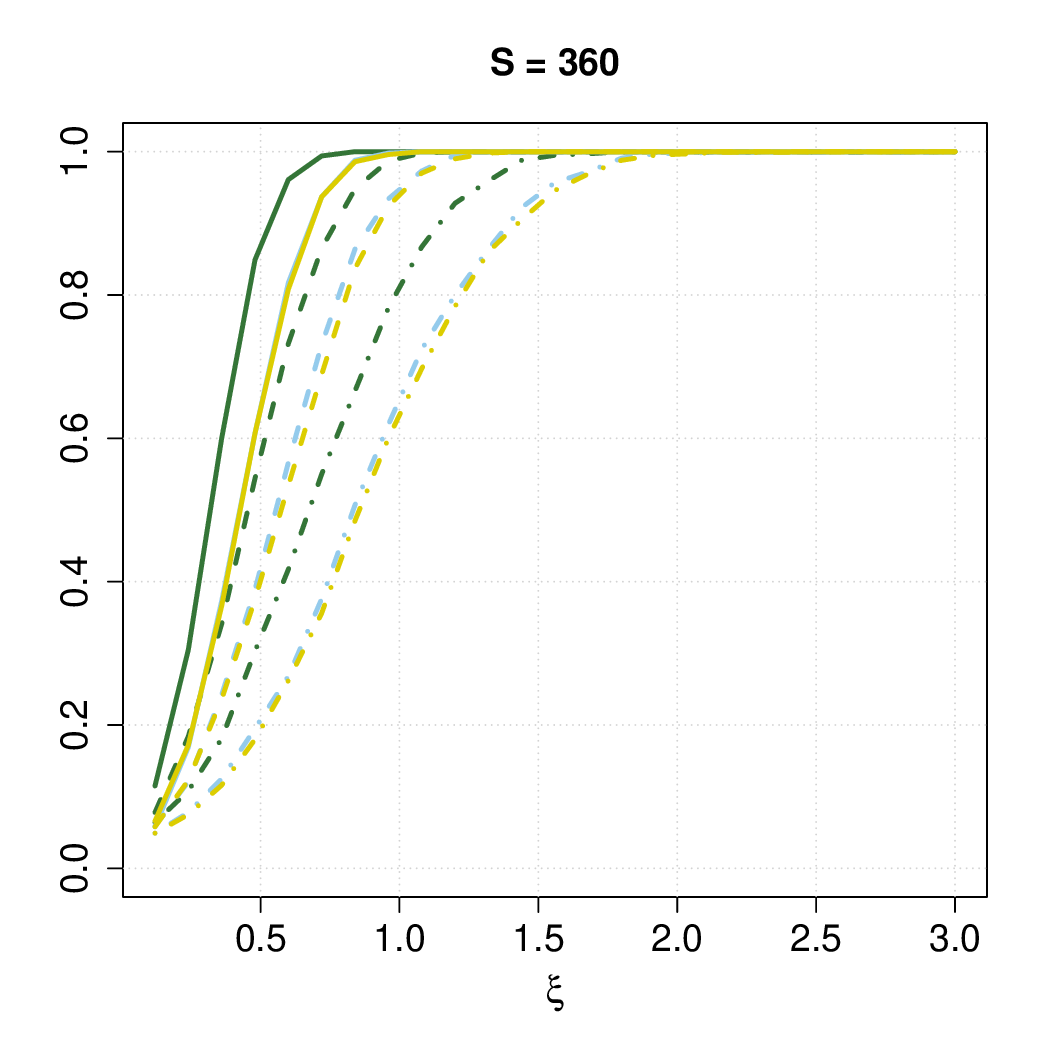}
	\includegraphics[width =2.1in, height =2.1in]{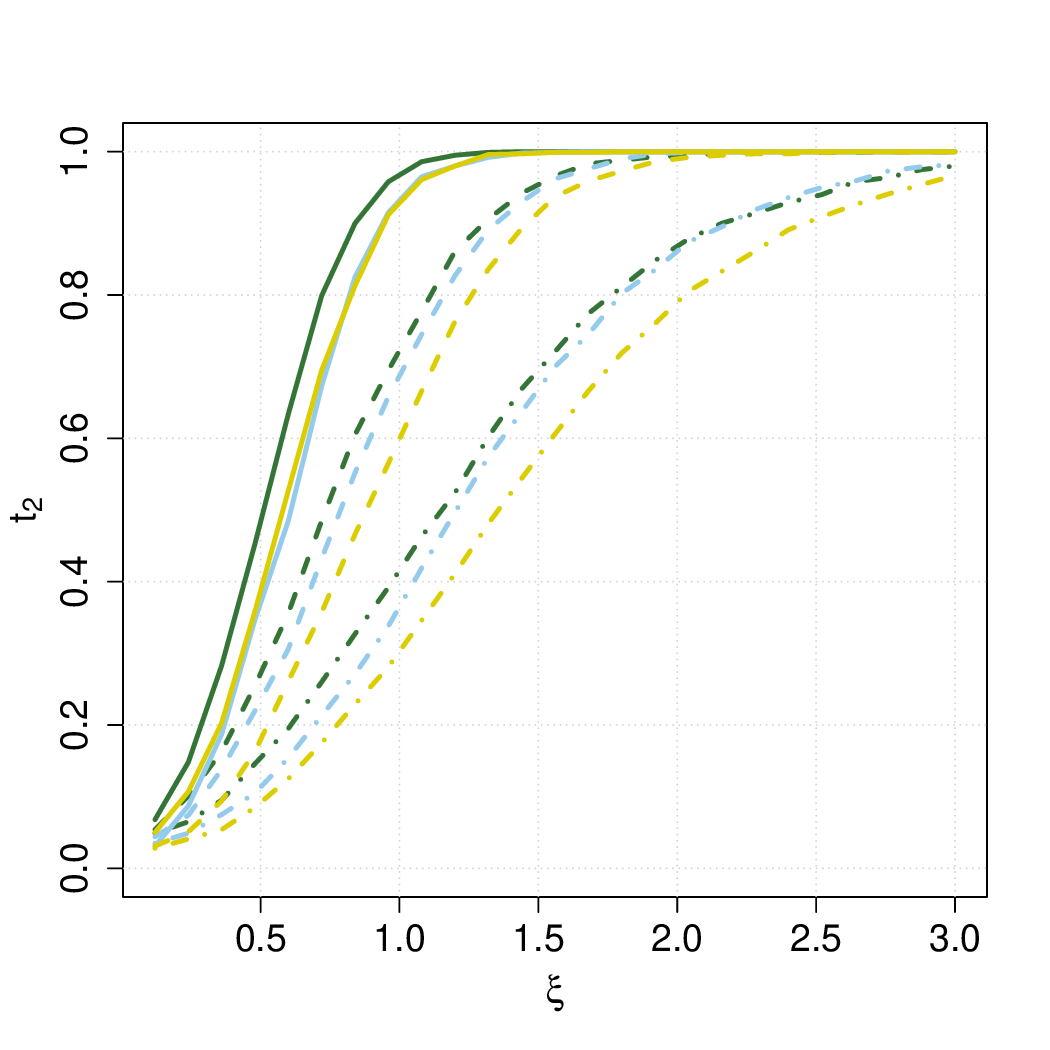}
	\includegraphics[width =2.1in, height =2.1in]{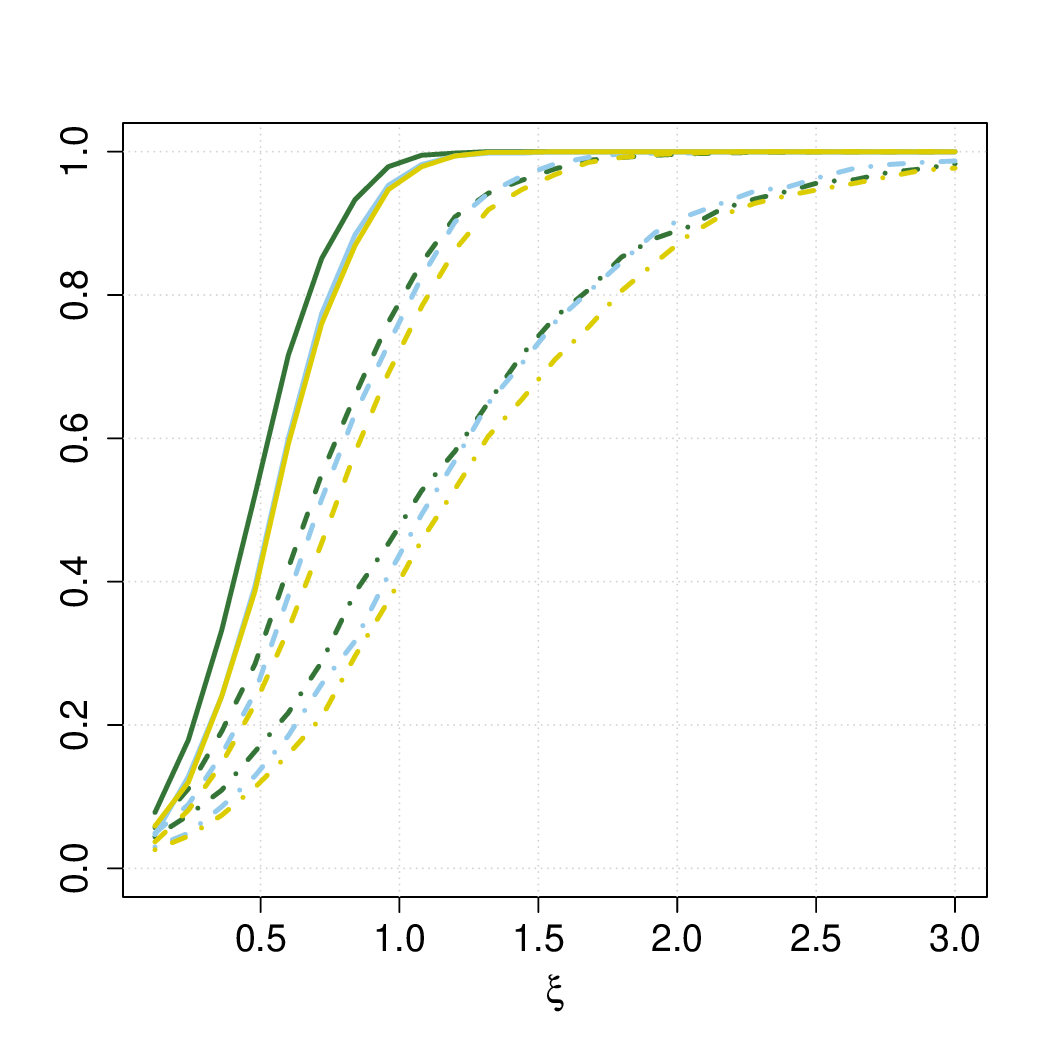}
	\includegraphics[width =2.1in, height =2.1in]{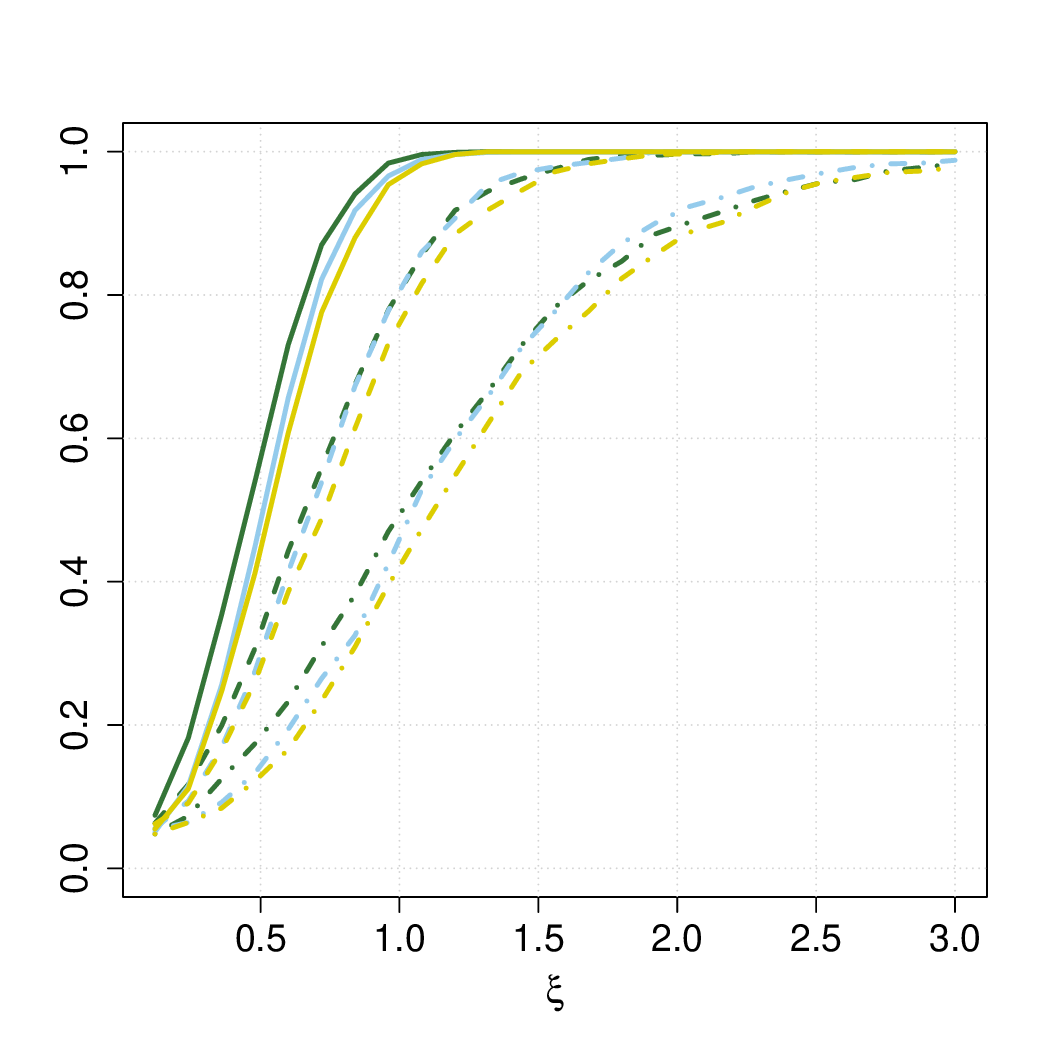}
	\caption{Power curves for functional sign and signed doubly ranked tests under $\Delta_1(s)$ and $\rho = 0.50$. The top row contains results under the multivariate normal, bottom row under $t_2$. Curves for the signed doubly ranked tests are in green, while curves for the functional sign tests are in blue (integral) and gold (sufficient statistic). Solid curves are for when $n = 60$, dashed curves for when $n = 30$, and dotted-dashed curves for when $n = 15$.\label{f:pone50}}
\end{figure}

\begin{figure}
	\centering
	\includegraphics[width =2.1in, height =2.1in]{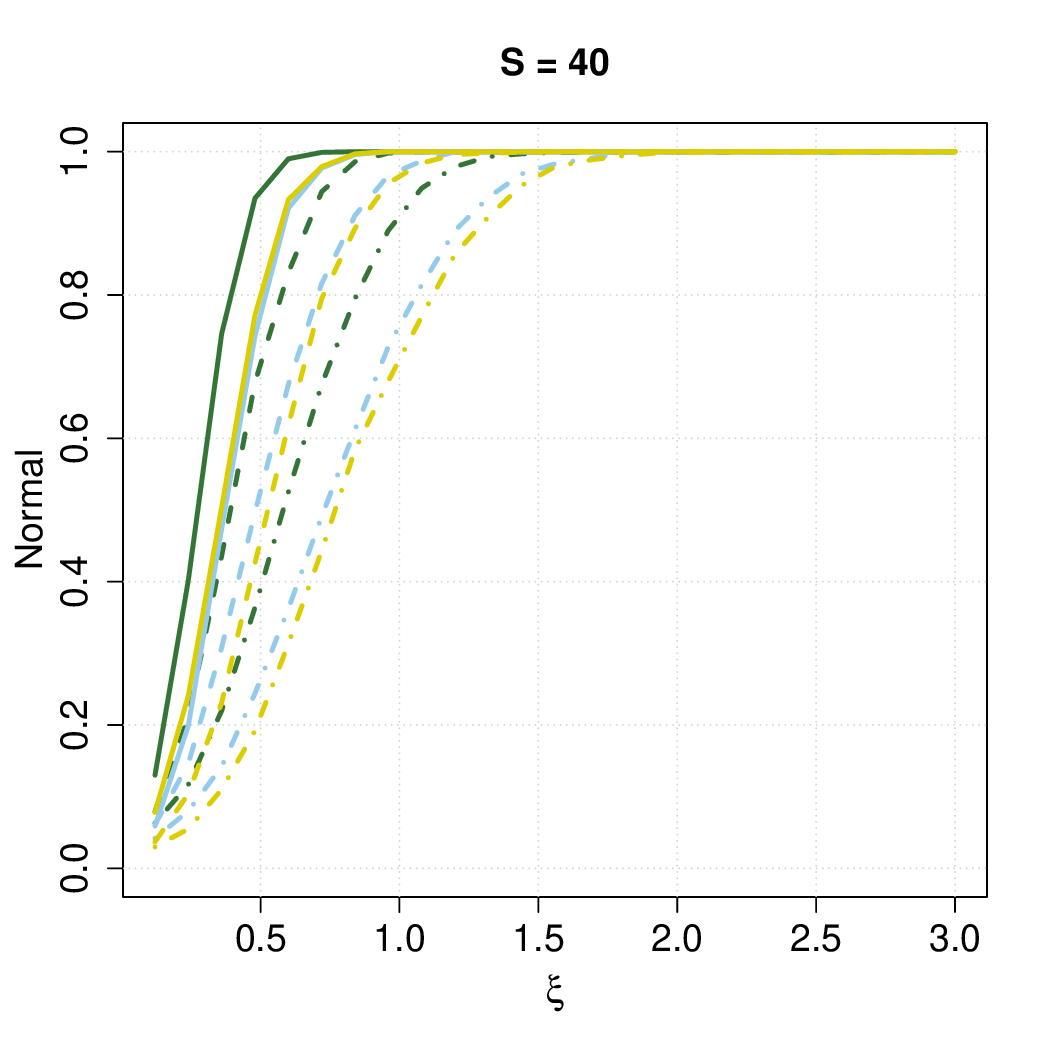}
	\includegraphics[width =2.1in, height =2.1in]{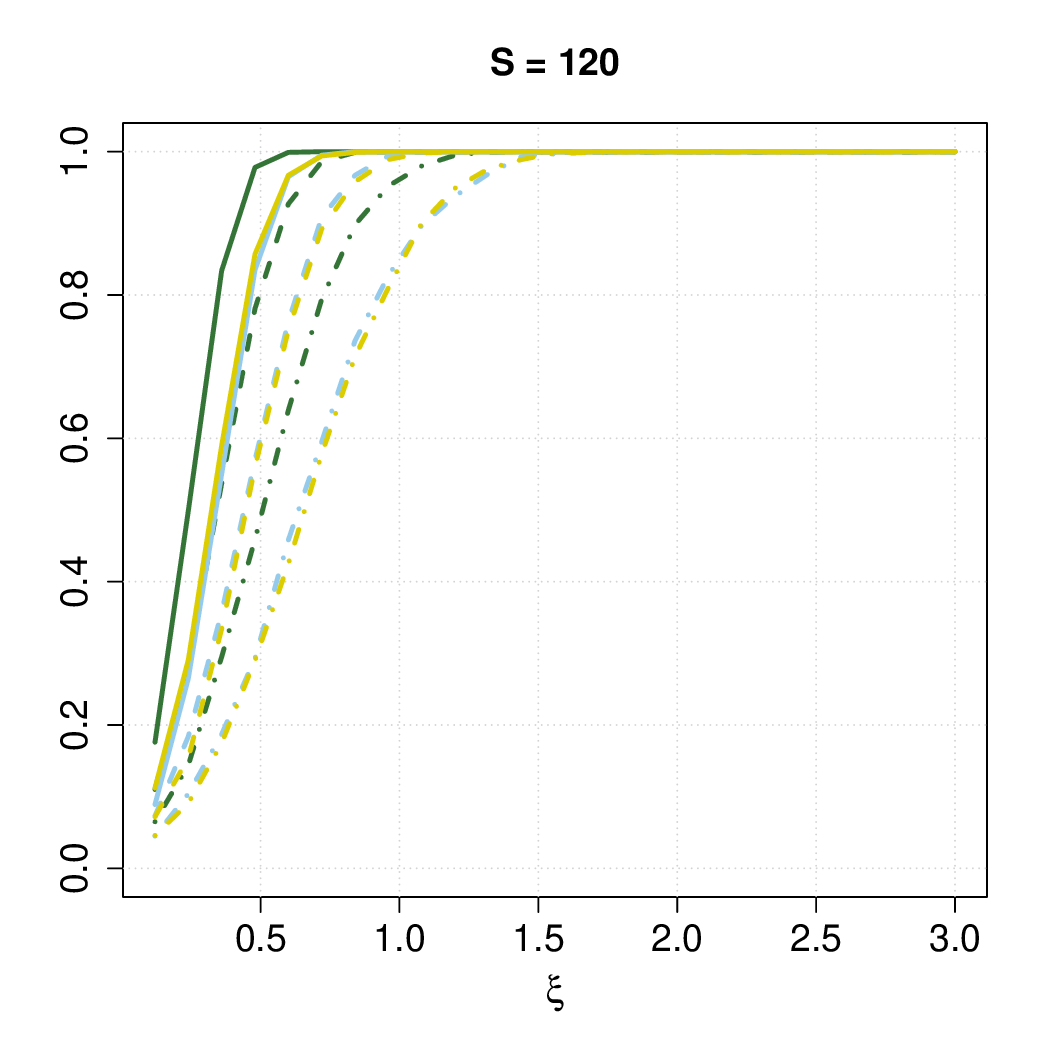}
	\includegraphics[width =2.1in, height =2.1in]{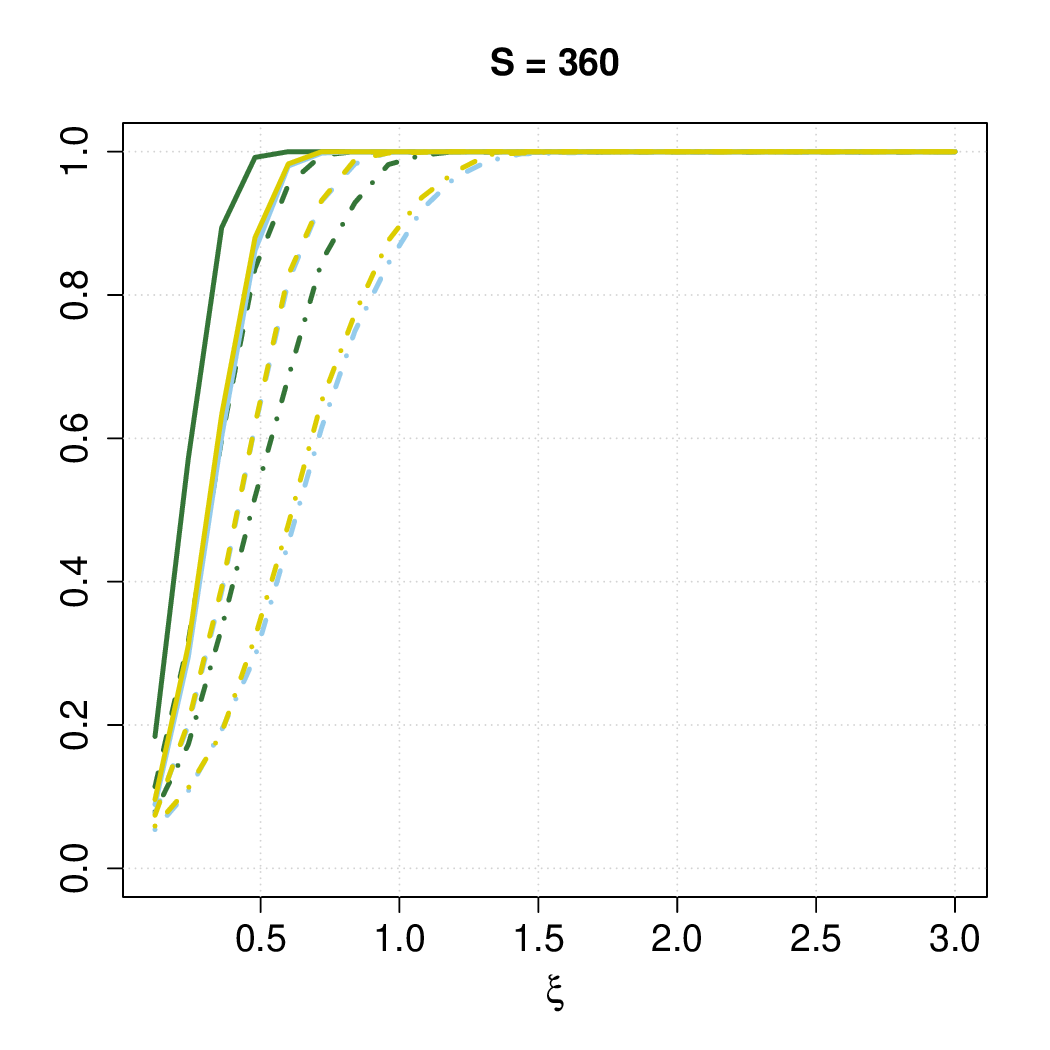}
	\includegraphics[width =2.1in, height =2.1in]{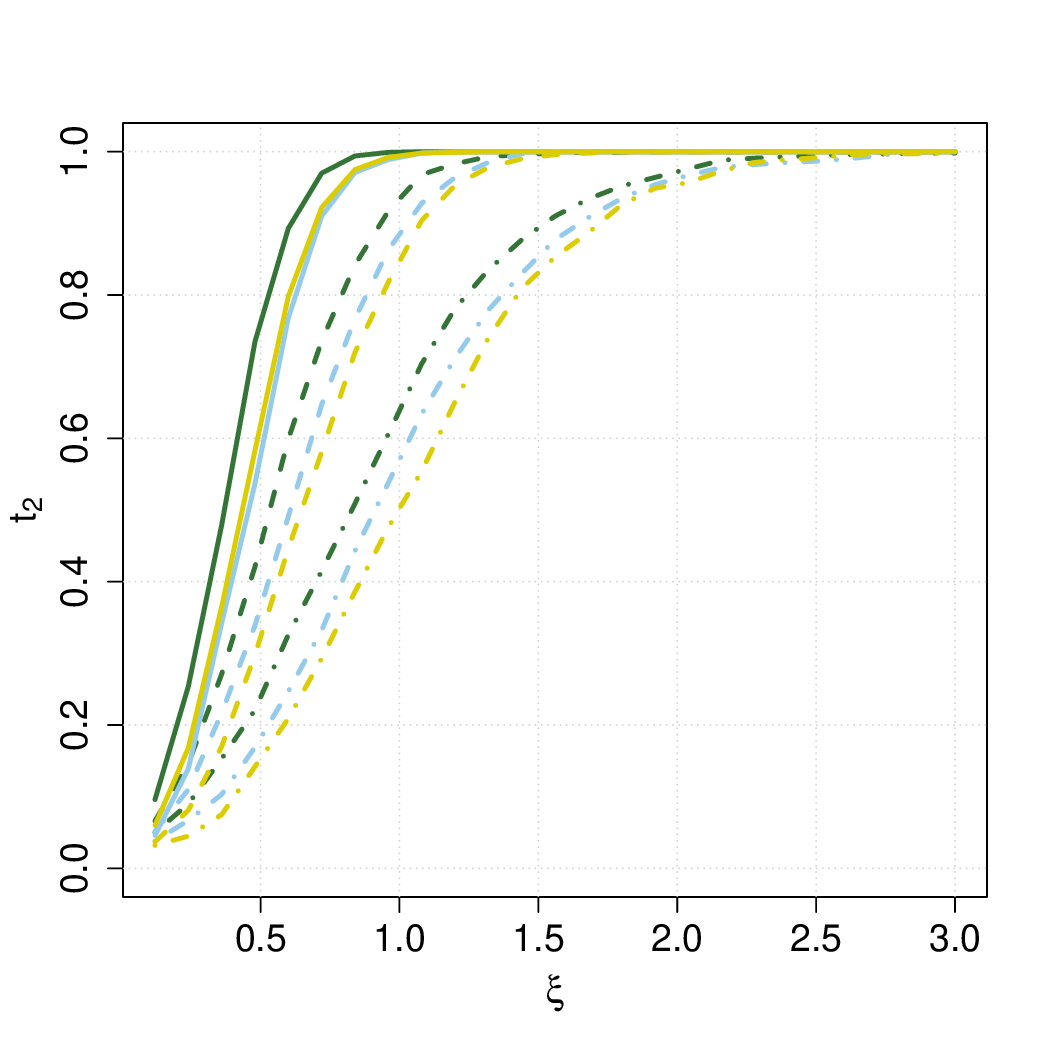}
	\includegraphics[width =2.1in, height =2.1in]{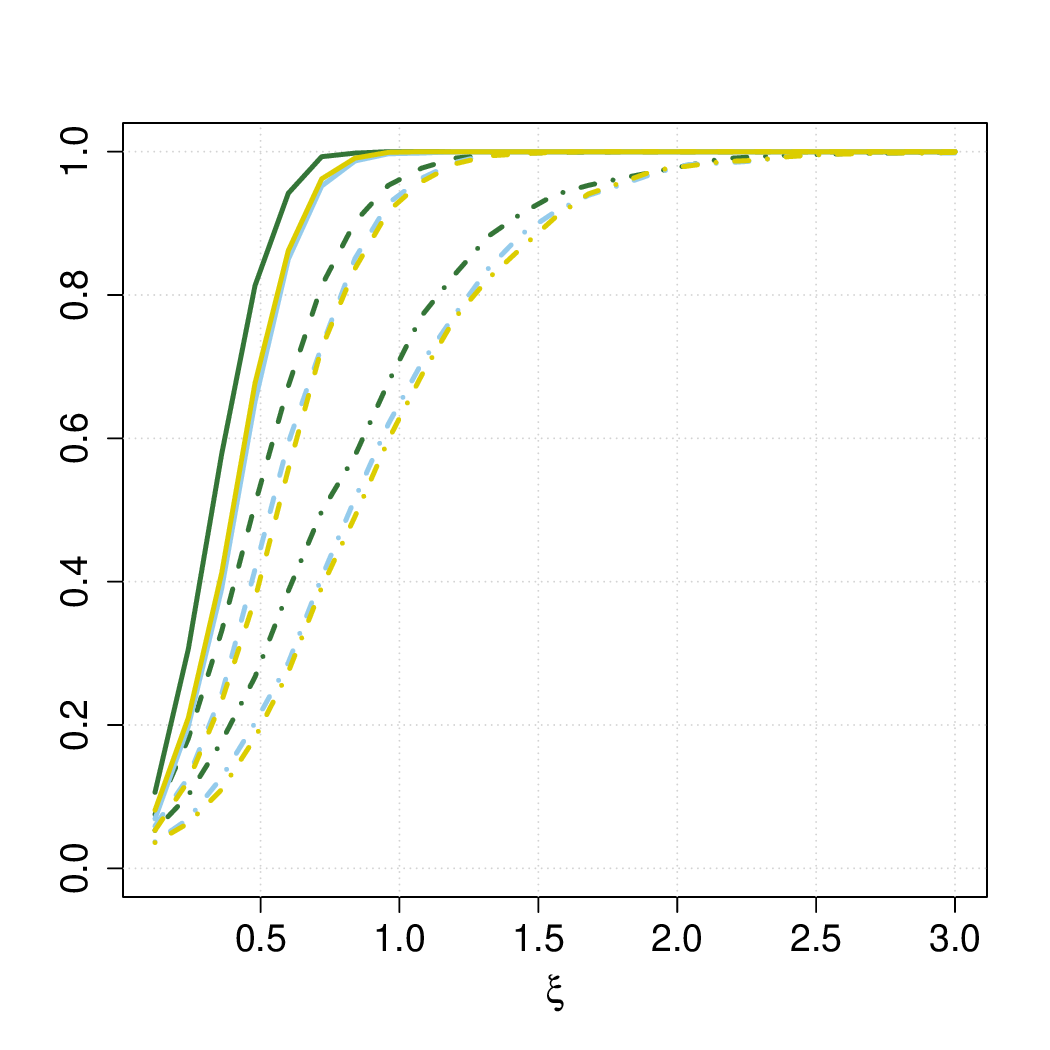}
	\includegraphics[width =2.1in, height =2.1in]{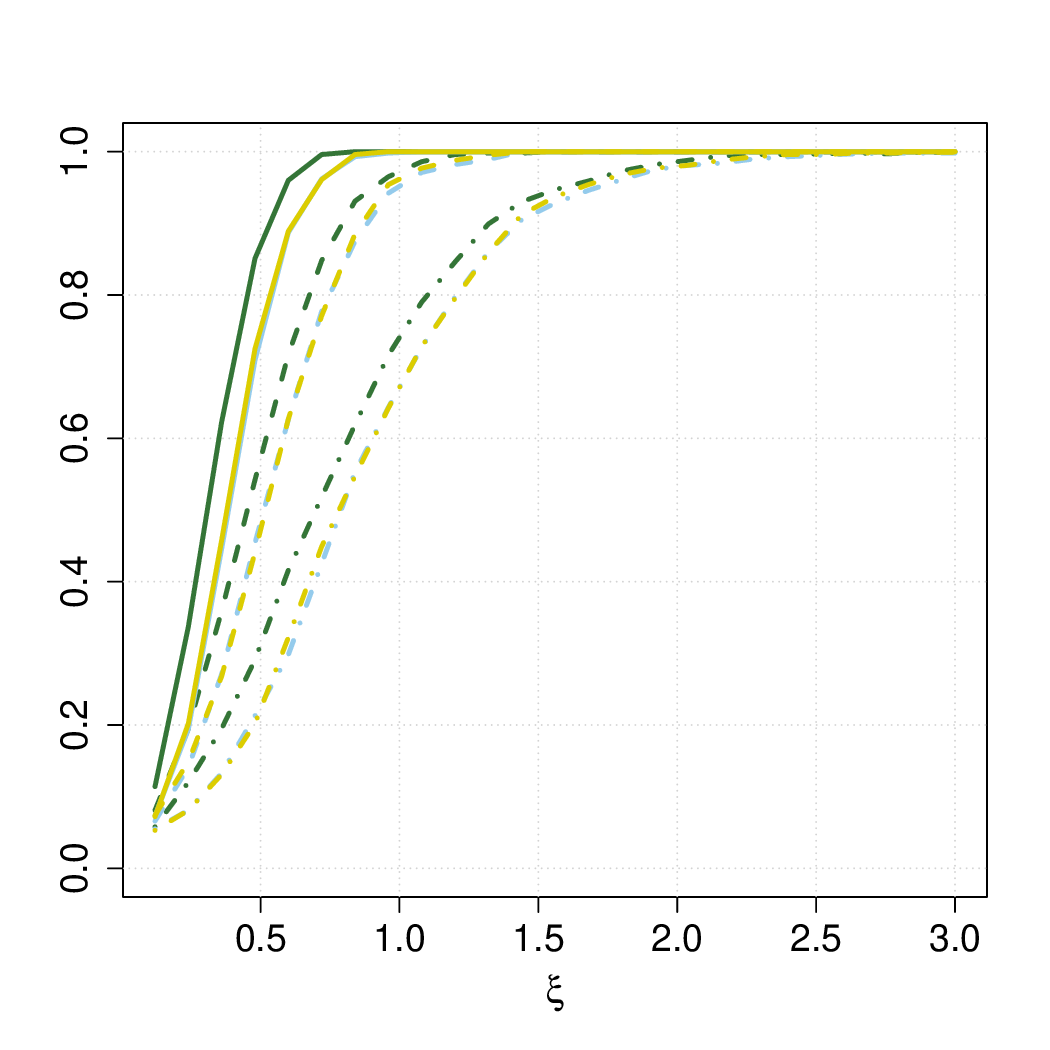}
	\caption{Power curves for functional sign and signed doubly ranked tests under $\Delta_2(s)$ and $\rho = 0.50$. The top row contains results under the multivariate normal, bottom row under $t_2$. Curves for the signed doubly ranked tests are in green, while curves for the functional sign tests are in blue (integral) and gold (sufficient statistic). Solid curves are for when $n = 60$, dashed curves for when $n = 30$, and dotted-dashed curves for when $n = 15$.\label{f:ptwo50}}
\end{figure}

\begin{figure}
	\centering
	\includegraphics[width = 2.42in, height = 2.42in]{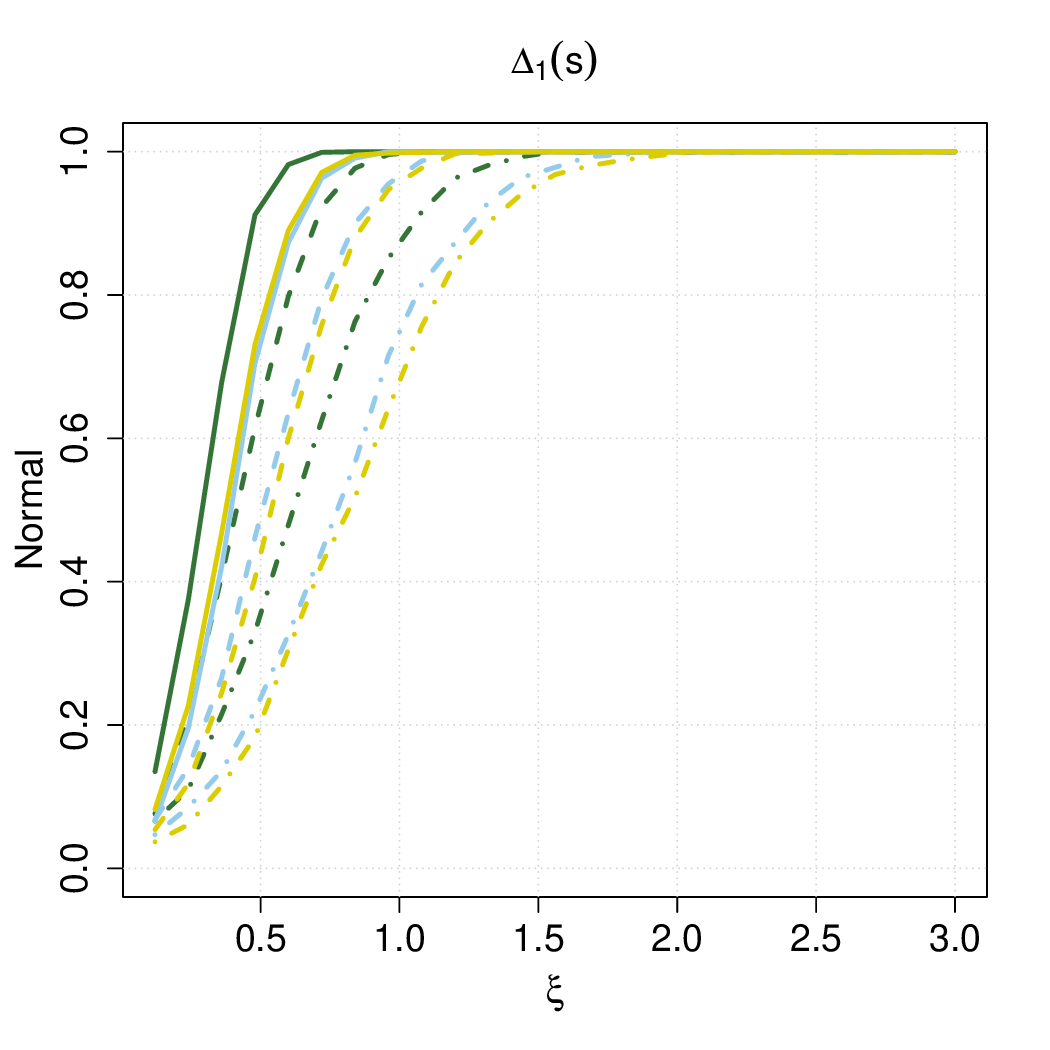}
	\includegraphics[width = 2.42in, height = 2.42in]{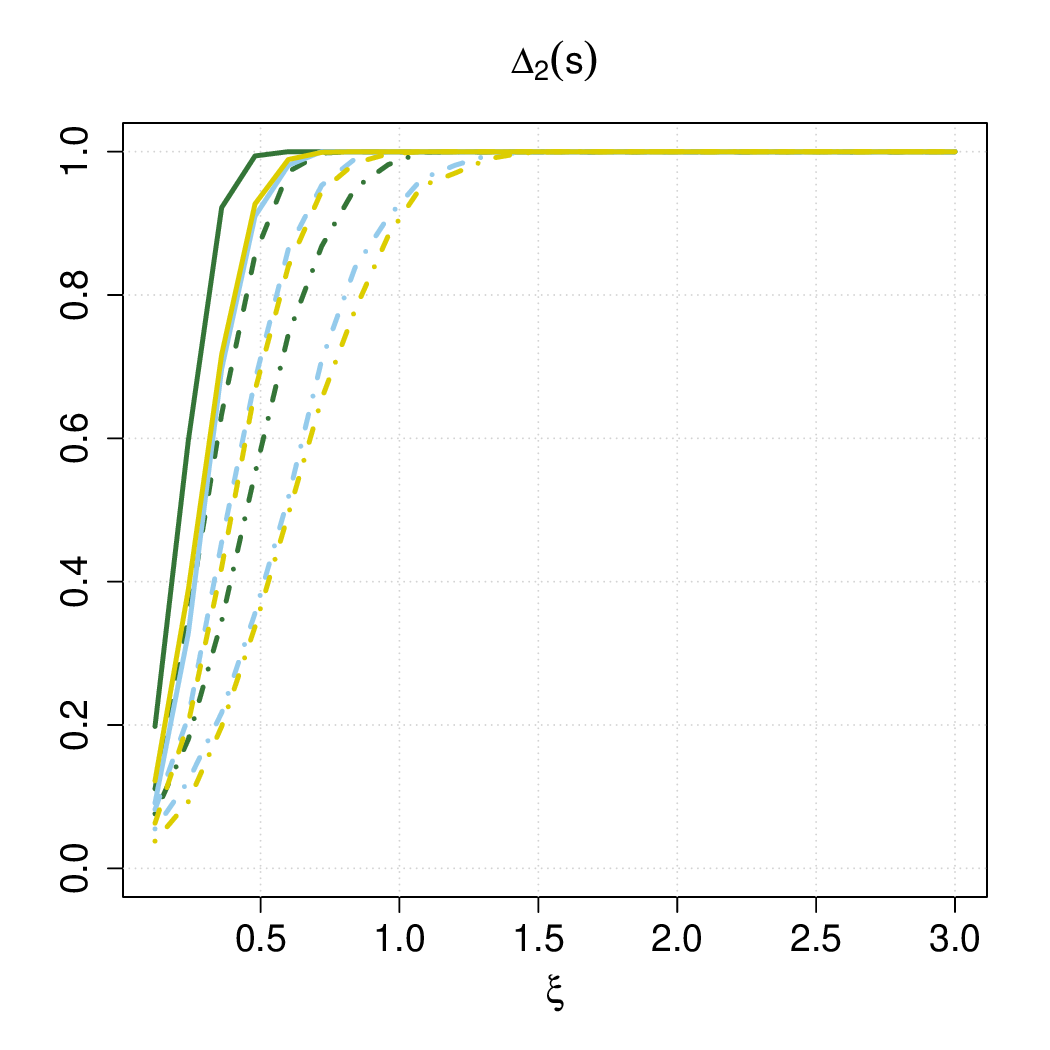}
	\includegraphics[width = 2.42in, height = 2.42in]{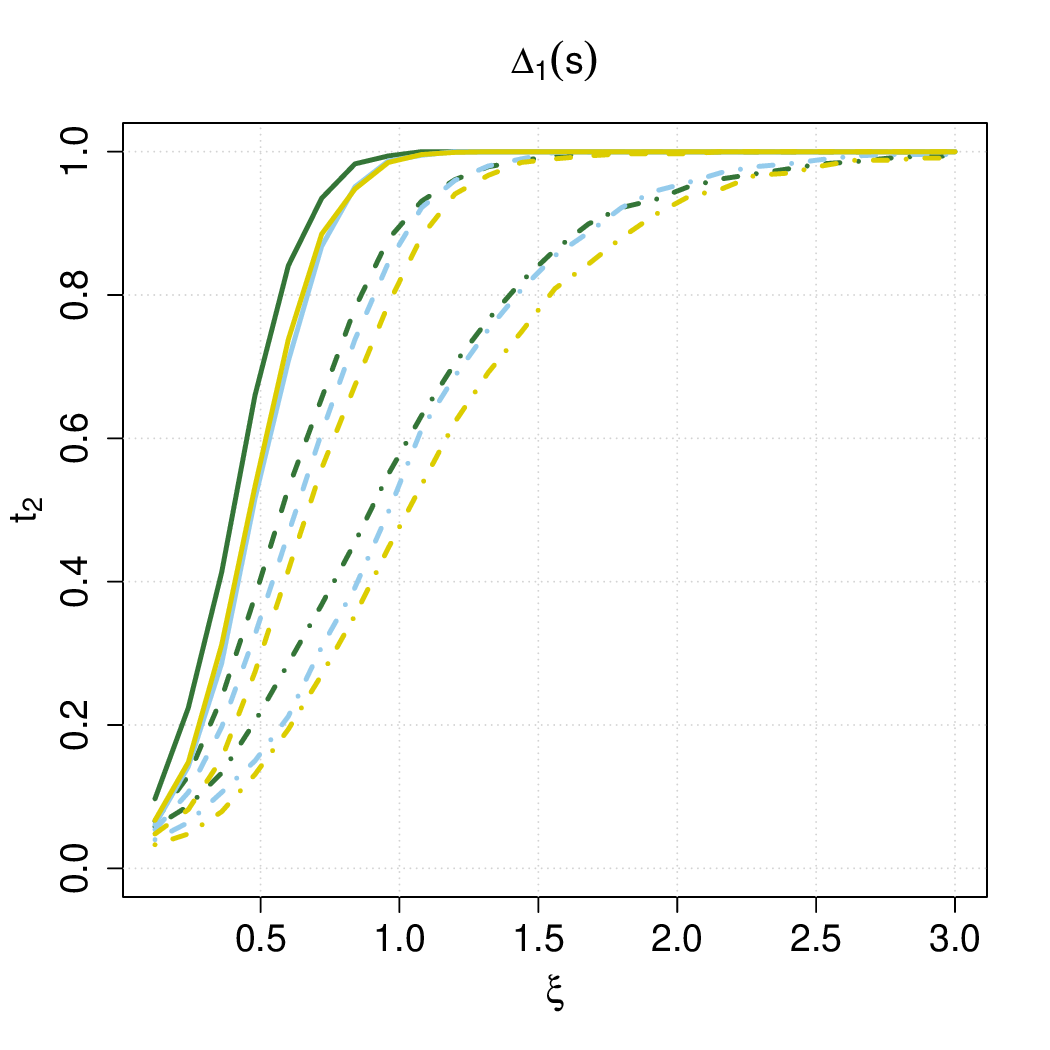}
	\includegraphics[width = 2.42in, height = 2.42in]{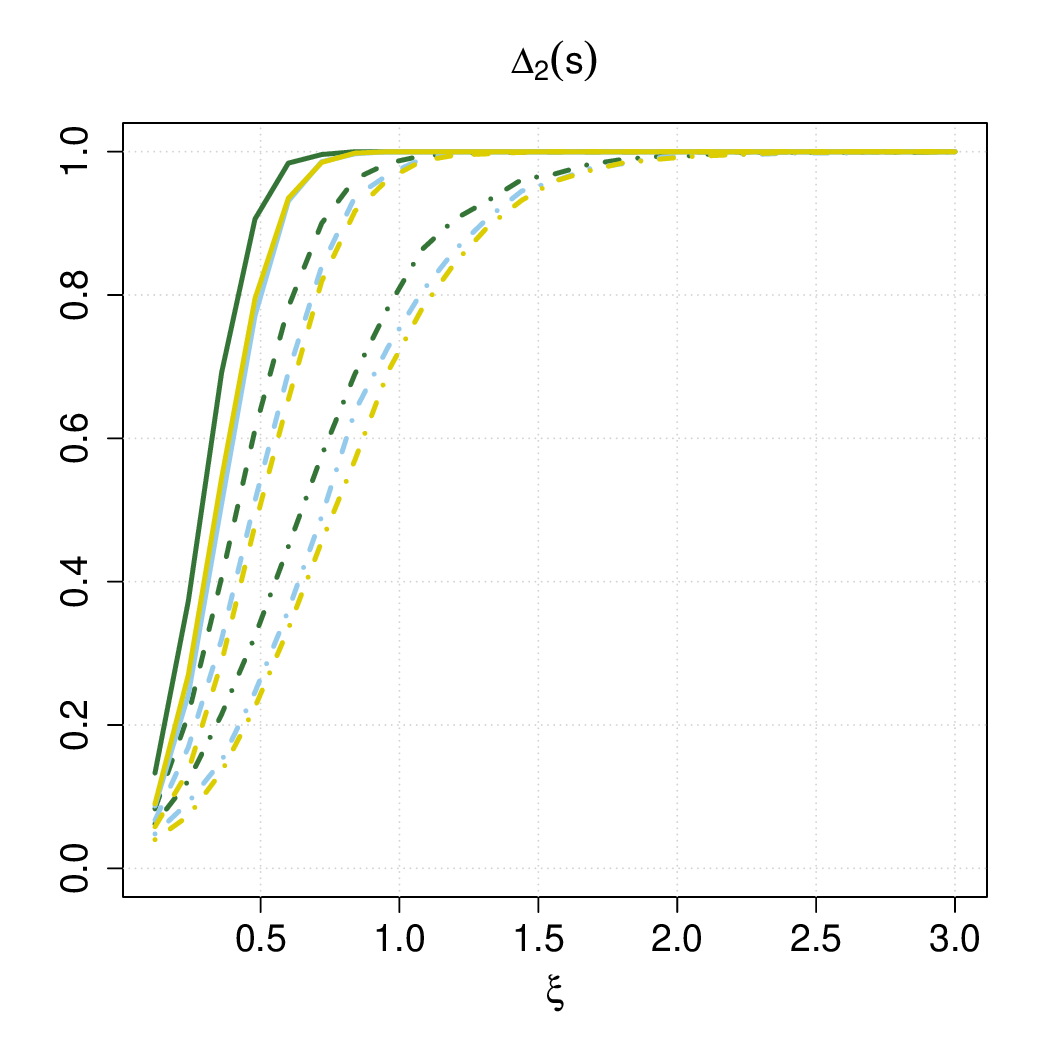}
	\caption{Power curves for functional sign and signed doubly ranked tests under the data-based setting. The top row contains results under the multivariate normal, bottom row under $t_2$. Curves for the signed doubly ranked tests are in green, while curves for the functional sign tests are in blue (integral) and gold (sufficient statistic). Solid curves are for when $n = 60$, dashed curves for when $n = 30$, and dotted-dashed curves for when $n = 15$.\label{f:p80}}
\end{figure}

Tables~\ref{t:fdr} and~\ref{t:fdrdb} contain the estimated type I error from the general (under $\rho = 0.75$) and data-motivated simulation studies, respectively. Each value in these tables is calculated based on 10,000 simulated datasets using the nominal threshold of $\alpha = 0.05$. Bolded values indicate the test with the nearest-to-nominal type I error. As indicated by Table~\ref{t:fdr}, the signed doubly ranked test consistently has type I error that is the closest to nominal regardless of the sampling density ($S$), sample size ($n$), distribution of the process ($Z_{ijk}$), or the level of correlation ($\rho$). In only one instance does the integral-based functional sign test the have closest to nominal type I error: when $\rho = 0.75$, $S = 120$, $Z_{ijk} \sim t_2$, and $n = 30$. Outside of when $n = 30$, the functional sign tests have type I error well below nominal in the 0.02 to 0.03 range. The data-based simulation study produces similar results, see Table~\ref{t:fdrdb}. Thus when $\rho = 2/3$, $S = 80$, and we use FPCA SC instead of FACE for preprocessing, the signed doubly ranked test is still the closest to nominal regardless of the distribution of $Z_{ijk}$ and the sample size.

Figures~\ref{f:pone} and~\ref{f:ptwo} display the power curves from the general study under the two $\Delta$ types, respectively, and when $\rho = 0.75$. Values in these figures were generated based on 1,000 simulated datasets and the nominal threshold of $\alpha = 0.05$. The rows of each figure correspond to the distribution of the process while the columns vary by $S$. Across the three methods, power grows most noticeably as sample size increases. The sampling grid also plays a small role with denser grids producing slightly higher power. Power also tends to be slightly higher when the underlying process is normal, compared to when it comes from the $t_2$ distribution. Comparing between methods, the signed doubly ranked test consistently has the highest power; although when the distribution is $t_2$ in the densest setting, under the smallest sample size, and for $\Delta_1(s)$, the three methods before similarly. Comparing between the functional sign tests, the integral-based functional sign test tends to perform better than the sufficient statistic-based version. However, this advantage diminishes as $n$ increases. Similar graphics for when $\rho = 0.5$ are in Figures~\ref{f:pone50} and~\ref{f:ptwo50} where we observe similar trends as in the $\rho = 0.75$ case.

The power curves for the data-based setting are in Figure~\ref{f:p80}. For this figure, the columns correspond to type of $\Delta$ since the sampling grid is fixed at $S = 80$. In this study, we see similar results to the general study where the signed doubly ranked test outperforms the functional sign tests. Once again, the second best performing approach is the integral-based functional sign test. This data-based study uses a different FPCA approach when preprocessing: FPCA SC instead of FACE. These results suggest that the preprocessing step does not necessarily impact power.
	
\section{Reanalysis of the Flight Study}
\label{s:data}

\cite{Meyer2019} present a study of in-flight heart rate and hear rate variability in older and vulnerable patients. Groups of participant entered a hypobaric pressure chamber on two separate days; a flight day at altitude and a control day at sea level. On the flight day, the chamber was depressurized to 7,000 ft. altitude while on the control day, the chamber was turned on but no depressurized. Of the 42 participants recruited for the study, 34 had partially observed curves on both days. Missingness within curves was largely due to technical problems with the instrumentation during the experiment. The order in which the treatment was received was block randomized (by group) with 25 patients experiencing the flight condition on their first and 9 patients receiving the flight condition on their second day in the chamber. In the original analysis, \cite{Meyer2019} use a longitudinal difference-in-difference analysis to assess the changes in five outcomes pre-condition versus post---patients were in the chamber for 40 minutes prior to the chamber being turned on. The authors also adjust their analysis for the time of day the participants entered the chamber and a chamber familiarity effect. In this reanalysis, we focus only on 79 measurements taken every five minutes from the portion of the experiment when the chamber was on. Since we are using this data as an illustration, our analysis does not adjust for time of day, chamber familiarity, or the pre-condition period.

Table~\ref{t:results} contains the results of our analysis using the signed doubly ranked test and the both functional sign tests. We consider the same five outcomes that \cite{Meyer2019} examine: heart rate (HR), root mean square of successive RR interval differences (rMSSD), the standard deviation of normal-to-normal intervals (SDNN), high frequency power (HF), and low frequency power (LF). The latter four generally measure heart rate variability or HRV, although these metrics vary slightly in their interpretations; see \cite{Meyer2019} for additional details. Artificially induced changes to heart rate (increases) or HRV (decreases) can have potentially negative effects on heart health. Thus, we seek to see if there are difference over the duration of the simulated flight between conditions. The data we analyze from this study is presented in Figure~\ref{f:hr}.

\begin{table}
	\centering
	\caption{Reanalysis of flight data from \cite{Meyer2019} using functional sign and signed doubly ranked tests. Table values include the test statistic, either $W$ or $U$, and the corresponding $p$-value (parenthetically). SDRT denotes signed doubly ranked test, FST denotes functional sign test, Int. denotes integral, and Suff. denotes sufficient statistic.\label{t:results}}
	\begin{tabular}{lclclc}
		\hline
		\multirow{2}{*}{Outcome} & \multirow{2}{*}{SDRT} & & \multicolumn{3}{c}{FST}  \\
		\cline{4-6}
		 &  & & Int. & & Suff. \\
		\hline
		HR & 424 (0.0299) & & 23 (0.0576) & & 23 (0.0576)  \\
		rMSSD & 247 (0.3974) & & 14 (0.3915) & & 13 (0.2295)  \\
		SDNN &  220 (0.1906) & & 12 (0.1214) & & 13 (0.2295) \\
		HF & 240 (0.3342) & & 13 (0.2295) & & 12 (0.1214) \\
		LF & 264 (0.5772) & & 13 (0.2295) & & 13 (0.2295) \\
		\hline
	\end{tabular}
\end{table}

When examining the signed doubly ranked test results in Table~\ref{t:results}, we see that heart rate returns a test statistic of $W = 424$ with a corresponding $p$-value of $p = 0.0299$, giving evidence to suggest there is a difference in heart rate under the two conditions. The expected value of $W$'s distribution for $n = 34$ subjects is $\frac{n(n+1)}{4}$ or 297.5; this is the same expectation as in the univariate case \citep{Wilcoxon1945}. Since the observed $W$ is larger than this value, there are more positive signed ranks, on average. To obtain the difference, we take the flight curve minus the control curve. Thus, we have evidence to suggest that heart rate is higher under the flight condition than under the control condition. \cite{Meyer2019} come to the same conclusion in their adjusted analysis.

The analysis of the signed doubly ranked tests for the remaining HRV metrics does not provide any definite conclusions. The test statistics are all below 297.5 implying that, on average, there tend to be more negative signed ranks. However, the corresponding $p$-values range from 0.1906 to 0.5772. These results are also consistent with the findings of \cite{Meyer2019}. The test statistics for the functional sign tests follow the same directional pattern with $U$ being larger than expected for HR and lower than expected for the HRV metrics; the expected value under $H_0$ is 17 for the functional sign tests. In the case of heart rate, both functional sign tests return marginally significant---that is, $p$ between 0.05 and 0.10---results with the same test statistic: $U = 23$ (p = 0.0576). However, the corresponding $p$-values for HRV range from 0.1214 to 0.3915.

	\section{Discussion}
	\label{s:disc}
	
	In this manuscript, we propose two functional sign tests and a signed doubly ranked test for analyzing pairs of functional data. Prior work in this area was limited in scope and performed the basis transformation under the alternative, rather than under the null. As in the univariate case, we demonstrate that the signed rank test is more powerful than the functional sign tests. This can also be seen in our reanalysis of the heart rate data from the flight study. However, the comparison is not entirely apt as the relationships between the functional sign tests and the signed doubly ranked test are not as direct as in the univariate setting. The signed doubly ranked test does not take its signs from the functional sign tests but rather uses a summary of signed ranks at each $s$. This approach builds the test within the doubly ranked framework which allows it to maintain the null properties of the signed rank test at each step. Nevertheless, the signed doubly ranked test does exhibit good control of type I error while consistently maintaining the highest power.
	
	As is the case for all doubly ranked tests, the signed doubly ranked test is a global test that generally indicates a difference in the signed ranks which we can use to suggest a direction, on average, over the difference curve. The functional sign tests are also global test. These three tests cannot determine specific locations of differences when applied to the curves in their entirety. They can, however, indicate that the curves under one condition were typically higher (or lower) than the curves under the other condition. The signed doubly ranked test could be used as a primary analysis of paired functions or as part of an exploratory data analysis leading to more sophisticated modeling of the curves using other functional data techniques. While we employ both FACE and FPCA SC in the preprocessing step, one can use any basis representation of the functional curves or use the raw data themselves as our theoretical results do not rely on the preprocessing step. They simply assume that $D_i(s)$ comes from a difference curve, however the user defines it.

\bibliographystyle{abbrvnat} % Style BST file (imsart-number.bst or imsart-nameyear.bst)
\bibliography{fullbib.bib}

\end{document}